\def\eq{\displaystyle\stackrel\triangle=}
\def\xra{\xrightarrow}
\def\ra{\rightarrow}
\def\logten{\log_{10}}
\newtheorem{assum}{Assumption}
\newtheorem{thm}{Theorem}
\newtheorem{prop}{Proposition}
\newtheorem{lem}{Lemma}
\newtheorem{rem}{Remark}
\newtheorem{defn}{Definition}
\begin{document}

\title{\bf RSS-Based Localization: Ensuring Consistency and Asymptotic Efficiency}

\author[1]{Shenghua Hu}
\author[2]{Guangyang Zeng}
\author[1]{Wenchao Xue}
\author[1]{Haitao Fang}
\author[2]{Junfeng Wu}
\author[1]{Biqiang Mu}

\affil[1]{State Key Laboratory of Mathematical Sciences, Academy of Mathematics and Systems Science, Chinese Academy of Sciences, Beijing 100190, China\\
\texttt{hushenghua17@mails.ucas.ac.cn; wenchaoxue@amss.ac.cn; htfang@iss.ac.cn; bqmu@amss.ac.cn}}
\affil[2]{School of Data Science, Chinese University of Hong Kong, Shenzhen, Shenzhen, China\\
\texttt{zengguangyang@cuhk.edu.cn; junfengwu@cuhk.edu.cn}}

\maketitle

\begin{abstract}
We study the problem of signal source localization using received signal strength measurements. We begin by presenting  verifiable geometric conditions for sensor deployment that ensure the model's asymptotic localizability. Then we establish the consistency and asymptotic efficiency of the maximum likelihood (ML) estimator. However, computing the ML estimator is challenging due to its reliance on solving a non-convex optimization problem.
To overcome this, we propose a two-step estimator that retains the same asymptotic properties as the ML estimator while offering low computational complexity—linear in the number of measurements. The main challenge lies in obtaining a consistent estimator in the first step. To address this, we construct two linear least-squares estimation problems by applying algebraic transformations to the nonlinear measurement model, leading to closed-form solutions.
In the second step, we perform a single Gauss-Newton iteration using the consistent estimator from the first step as the initialization, achieving the same asymptotic efficiency as the ML estimator. Finally, simulation results validate the theoretical property and practical effectiveness of the proposed two-step estimator.
\end{abstract}

\begin{keywords}
Received signal strength, Source localization  Asymptotic localizability, Consistency,  Asymptotic efficiency, Two-step
estimators
\end{keywords}

\section{Introduction}

Signal source localization is the process of determining the location of signal source  from measurements collected by sensors. It plays a crucial role in a wide range of applications, including radar, sonar,  wireless networks, cognitive radio networks, multimedia systems \cite{Niu2018,Li2002,Gezici2005}, where accurate source location determination is essential. The methods for signal source localization depend on the signal's characteristics, the surrounding environment, and the sensors used. Generally, these methods are categorized based on the types of measurements they utilize, such as the time of arrival (TOA) \cite{Soares2015, Han2016, Zeng2022}, time difference of arrival (TDOA) \cite{Ho2012, Sun2019, Zeng2024}, angle of arrival (AOA) \cite{ Shao2014, Sun2020,Zhou2024}, and received signal strength (RSS) \cite{So2011, Angjelichinoski2015, Liu2016,Zou2021,Aubry2023,Li2024TVT}. Among these, RSS-based localization stands out due to its simple implementation, low cost, and minimal communication requirements \cite{Liu2016, Wang2019, Shi2020}. RSS measurements can be easily obtained by most wireless devices \cite{Angjelichinoski2015}. Moreover, compared to TOA localization, RSS-based localization does not require clock time synchronization \cite{So2011}. These advantages make RSS-based localization methods particularly attractive for applications such as indoor geolocation where GPS is ineffective, surveillance and military operations \cite{Ying2019, Hu2017tsp}. Therefore, this paper focuses on signal source localization using RSS measurements.

  RSS-based localization is essentially equivalent to TOA localization in the absence of noise since the RSS measurement is a bijection of range measurement. However, in practical scenarios, RSS measurements are often corrupted by noise, requiring the source coordinates to be deduced from noisy data. Therefore, RSS  source localization is a  typical parameter estimation problem. The maximum likelihood (ML) and least squares (LS) estimators  are the most commonly utilized estimators for source localization. 
  In most cases, the measurement   noise is  assumed to be independent and identically distributed (i.i.d.) Gaussian random  variables, which makes the two estimators equivalent. Even for the simple  Gaussian noise, it is challenging to solve the ML or LS estimator due to the associated non-convex optimization problems.

For RSS-based localization, there are primarily two kinds of methods for solving the ML and LS estimators: the iteration-based  and relaxation-based methods. 
The iteration-based methods typically employ iterative optimization algorithms, such as gradient descent, Gauss-Newton (GN), and Levenberg-Marquardt methods, to search for the global optimum of the ML and LS estimation problems \cite{Li2006, Cheng2009}. 
The gradient descent method with a weighted least squares (WLS) initialization is applied to localize the true source in \cite{Tarrio2011}. 
  In  \cite{Coluccia2014}, the source location is iteratively determined using ranges estimated via an empirical Bayesian method, which assesses the distances between sensors and the signal source. 
The relaxation-based methods utilize measurement model linearization, loss function approximation, and constraint relaxation to approximately solve the ML and LS estimation problems. Under the small noise assumption, the linearization of the RSS-based measurement model introduces an auxiliary parameter and a norm constraint associated with the true source, leading to constrained linear LS or WLS estimation problems \cite{Wang2019,Shi2020,Vaghefi2013,So2011,Salman2014,Larsson2025}. The constrained linear LS problems are typically solved using semidefinite programming (SDP) \cite{Vaghefi2013} or second-order cone programming \cite{Tomic2015}, while the linear WLS problems are addressed using approximated covariance matrices and norm constraints \cite{So2011},   approximated ranges from the LS solution \cite{Salman2014}, or an eigenvalue method aided by a customized transformation function \cite{Larsson2025}.

Existing approaches to RSS-based localization face  limitations. The iterative optimization methods  suffer from high computational complexity and often converge to a stationary point rather than the global solution. The relaxation-based methods typically introduce asymptotic bias stemming from small-noise approximations. Particularly, this bias increases as noise levels increase. Moreover, the SDP-based relaxed problems are computationally expensive to solve. Overall, existing methods lack consistency and asymptotic efficiency\footnote{``Consistency" refers to the estimator converging to the true value as the number of measurements increases, while ``asymptotic efficiency" means that, as the number of measurements grows, the mean squared error (MSE) approaches the Cramér-Rao lower bound (CRLB), the theoretical minimum MSE for any unbiased estimator.}.
Furthermore, the fundamental question of asymptotic localizability\footnote{``Asymptotic localizability" describes the property that the source location can be uniquely determined asymptotically using available measurements. This concept is rigorously formalized in Definition \ref{def_asy_loc}.} for RSS-based localization has yet to be sufficiently explored in the literature.
To address these gaps, this paper makes two key contributions:
  (i) we establish the asymptotic localizability of RSS-based localization; (ii) we develop an estimator that is consistent and asymptotically efficient with low computational complexity.

To establish asymptotic localizability, we develop verifiable geometric conditions for sensor deployment, such as non-cohyperplanarity and non-cohypersphericity. These conditions ensure that the sensor network can uniquely determine the true source asymptotically.
To obtain a consistent and asymptotically efficient estimator for RSS-based localization, we first develop the ML estimator and rigorously prove its consistency and asymptotic efficiency under suitable conditions. However, the non-convex nature of the ML estimator poses challenges for accurate and efficient computation.
To address the challenges, we introduce a two-step estimator: (i) deriving a consistent estimator of the source location; (ii) running a single GN iteration using this consistent estimator  as the initial value.
A key advantage of the two-step estimator is its guaranteed asymptotic efficiency when the initial consistent estimator
has a convergence rate of $O_p(1/\sqrt{n})$ \cite{Lehmann1998}, where $n$ is the number of measurements.
The two-step estimator has been successfully applied in parameter estimation of nonlinear rational models\cite{Mu2017}, and TOA and TDOA-based localization \cite{Zeng2022,Zeng2024}.
 Therefore, the key to obtaining the asymptotically efficient estimator of the RSS-based localization lies in deriving a  consistent estimator with the convergence rate of $O_p(1/\sqrt{n})$.
 This is accomplished through the LS estimators associated with RSS-based localization.
 
 Deriving a consistent estimator for RSS-based localization is challenging due to the composite structure of logarithmic and norm functions in the measurement model, as shown in \eqref{RSS}.
For known noise variance, through appropriate model transformation, we obtain a linear regression model where the parameter vector comprises the true source and its inner product. The resulting unconstrained linear LS estimator    provides a consistent estimator for the true source with the rate $O_p(1/\sqrt{n})$, which removes the norm constraint on the parameter vector and avoids the need to solve the constrained non-convex optimization problems as in the relaxation-based methods.
For unknown noise variance, the method remains similar but requires augmenting the parameter vector with an additional constant and introducing an extra assumption on sensor deployment geometry to obtain a consistent estimator for the true source with the rate $O_p(1/\sqrt{n})$.
The computational complexity of our two-step estimator is dominated by the LS estimation and a single GN iteration, resulting in linear scaling with the number of measurements and ensuring high efficiency.

The proposed two-step estimator achieves asymptotic efficiency and computational efficiency, ensuring optimal estimation accuracy with low computational resource for large-scale RSS measurements.
Localization based on massive RSS measurements is widely applicable in practical scenarios, particularly due to the simplicity and low cost of RSS devices, which enable scalable deployment. The growing prevalence of wireless devices and networks has further increased the availability of RSS observation points \cite{Lee2011book}. Additionally, massive multiple-input multiple-output (MIMO) technology—a key enabler of 5G \cite{Prasad2017}—enhances the large-scale collection of RSS measurements for user positioning \cite{Prasad2018}.

The rest of the paper is organized as follows. Section \ref{sec:rss} formulates RSS-based localization problem.
Section \ref{sec:ML} develops sufficient geometric conditions for guaranteeing  the asymptotic localizability.
Section \ref{secmlts} introduces  the ML estimator and the two-step estimator. Section \ref{sqrtncon} represents the $\sqrt{n}$-consistent estimator for the RSS-based localization.   Section \ref{sec:algo} presents the algorithm for the two-step estimator with the analysis of computational complexity. Section \ref{sec:sim} demonstrates the estimation accuracy and computational efficiency of the estimator using extensive Monte-Carlo simulations. Finally, Section \ref{sec:con} concludes the paper. The proofs of all the theorems, propositions and lemmas in the main body of the paper can be found in 
the Appendix A.

\textbf{Notation.} 
In this paper, \(\mathbb{E}\) and \(\mathbb{V}\) denote the expectation and variance with respect to the distribution of the noise, respectively, unless otherwise specified. The superscript \((\cdot)^0\) indicates the true or noise-free value of a given quantity. For a sequence of random variables \(X_n\), the notation \(X_n = O_p(c_n)\) means that \(X_n / c_n\) is bounded in probability, while \(X_n = o_p(c_n)\) indicates that \(X_n / c_n\) converges to zero in probability. Lastly, \(\nabla\) and \(\nabla^2\) denote the first and second order differential operators, respectively.

\section{Problem formulation for RSS-based  localization}
\label{sec:rss}
In this section, we propose the problem formulation for RSS-based source localization.

Suppose there are \(n\) sensors scattered in an \(m\)-dimensional space, where $m=2$ or $m=3$. Let \(p^0 \in \mathbb{R}^m\) denote the coordinates of the signal source, which are unknown and need to be estimated using the RSS measurements. Let \(p_i \in \mathbb{R}^m\) represent the coordinates of sensor \(i\), \(i = 1, \dots, n\), which are known exactly. 
Let $$d_i \eq \|p_i - p^0\|$$ represent the distance between sensor \(i\) and the source. In the absence of disturbance, the noise-free signal strength received by sensor \(i\) is given by:
\begin{equation}
\label{Friis}
    P_i^0 = P_0\big/d_i^\alpha, \quad i=1,...,n,
\end{equation}
where \( P_0 \) is a positive constant associated with factors affecting signal strength, and \( \alpha \) is the path loss exponent, typically ranging from 1 to 5 depending on the propagation environment, with \( \alpha = 2 \) corresponding to free-space conditions \cite{So2011}.  The model in \eqref{Friis} is also known as the Friis free space equation \cite{Rappaport2024}. The observed RSS in decibel scale is given by
\begin{equation}
\label{RSS_raw}
    10\log_{10}(P_i) = 10\log_{10}(P_0) - 10\alpha\log_{10}(d_i)+ \varepsilon_i,~ i=1,...,n,
\end{equation}
where $\log_{10}(\cdot)$ denotes the logarithm of a positive number with base 10 and $\varepsilon_i$ is the measurement noise.

\begin{rem}
\label{rem1}
Different formulations of the RSS model can be found in the literature \cite{Vaghefi2013,Salman2014,Larsson2025}, but they all share the same underlying Friis model \eqref{Friis}. 
\end{rem}

For the model \eqref{RSS_raw}, we make the following assumption.
\begin{assum}
\label{assum_noise}
\begin{enumerate}[(i)]
    \item The constants $\alpha$ and $P_0$ are known exactly.
    \item The noises $\{\varepsilon_i\}_{i=1}^n$ are i.i.d. Gaussian random variables with mean zero and finite variance $\sigma^2$.
\end{enumerate}
\end{assum}
\begin{rem}
 The constants $\alpha$ and $P_0$ are usually supposed to be known in prior through a testing and calibration campaign  \cite{Patwari2005,Tarrio2008,Coluccia2010,So2011}.
The measurement model \eqref{RSS_raw} has been validated to follow the log-normal shadowing nature, i.e. the noises $\{\varepsilon_i\}_{i=1}^n$ are i.i.d. additive Gaussian random variables \cite{Coulson1998,Patwari2005}.
\end{rem}

Under Assumption \ref{assum_noise}, define the equivalent RSS measurements and equivalent measurement noises as  
\[ y_i \eq -\frac{\log_{10}(P_i) - \log_{10}(P_0)}{\alpha}, ~ \omega_i \eq -\frac{\varepsilon_i}{10\alpha}, ~  i = 1, \dots, n. \]  
This yields the following equivalent RSS measurement model:
\begin{equation}
    \label{RSS}
    y_i = \logten(d_i)+\omega_i,\quad i=1,...,n,
\end{equation}
where $\{y_i\}_{i=1}^n$ are available equivalent measurements and $\{\omega_i\}_{i=1}^n$ are i.i.d. Gaussian random variables with mean zero and finite variance $\sigma^2/(100\alpha^2)$.
 In what follows, we focus on the equivalent model \eqref{RSS}. 
Therefore, RSS-based localization problem is to determine the true source corrdinates $p^0$ using the sensor corrdinates and the corresponding RSS measurements $\{p_i,y_i,i=1,...,n\}$.

\section{Asymptotic localizability of RSS-based localization}
\label{sec:ML}
In this section, we present sufficient conditions on sensor geometric deployment to ensure the asymptotic localizability of RSS-based localization.

\begin{assum}
	\label{assum_coordinates}
The source \(p^0\) lies within a bounded set \(\mathcal{P}^0\), and the sensors \(p_i\), for \(i = 1, \dots, n\), belong to a bounded set \(\mathcal{P}\), independent of \(n\). Moreover, the sets are disjoint, i.e., \(\mathcal{P}^0 \cap \mathcal{P} = \emptyset\).
\end{assum}
In practice, Assumption \ref{assum_coordinates} is readily fulfilled because sensors cannot be positioned arbitrarily far from the source. Additionally, the set \(\mathcal{P}^0\) can always be defined as a compact region surrounding the source, without needing prior knowledge of the source's precise location.

\begin{defn}
\label{def:empirical_dist}
Let \( x_1, \ldots, x_n \) be a sequence in a measurable space \((\Omega, \mathcal{F})\). The empirical distribution \( P_n \) is the discrete probability measure on \((\Omega, \mathcal{F})\) defined by
$
P_n \eq\frac{1}{n} \sum_{i=1}^n \delta_{x_i},
$
where \(\delta_{x_i}\) is the Dirac measure at \( x_i \), i.e., \(\delta_{x_i}(A) = \mathbf{1}_{\{x_i \in A\}}\) for \( A \in \mathcal{F} \).
\end{defn}

\begin{assum}
	\label{assum_cohyperplane}
	\begin{enumerate}[(i)]
	    \item The empirical distribution function $F_n$ of the sensor sequence $p_1, p_2,...$ converges to a distribution function $F_\mu$ and the probability measure induced by the distribution \(F_\mu\)  is denoted by \(\mu\);
	    \item For any positive integer \(n\), the sensors \(p_1, \dots, p_n\) do not lie on a line when \(m = 2\), nor on a plane when \(m = 3\).  
Moreover, there does not exist any subset $\mathcal{P}'$ of \(\mathcal{P}\) with \(\mu(\mathcal{P}') = 1\) such that \(\mathcal{P}'\) lies entirely on a line for \(m = 2\), or on a plane for \(m = 3\).
	\end{enumerate}
\end{assum}
 
Denote by  
$
f_i(p) \eq \log_{10}(\|p_i - p\|)  
$ 
the  predictive RSS  for model \eqref{RSS} for sensor \(i\), where \(i = 1, \dots, n\)
and define
\begin{align}
h_n(p) \eq   \frac{1}{n}\sum_{i=1}^n(f_i(p)-f_i(p^0))^2.
\end{align}
We now present the definition of asymptotic localizability for RSS-based localization.
\begin{defn}
\label{def_asy_loc}
For the model \eqref{RSS}, the true signal source $p^0$ is called asymptotically localizable if   $h_n(p)$ has a limit function and the limit function, denoted by   $h(p)\eq \lim_{n \to \infty}h_n(p)$,  has a unique minimum at $p=p^0$.
\end{defn}

 We have the following results on the asymptotic localizability for RSS-based localization.

\begin{thm}
\label{lem_uniq}
    Under Assumptions \ref{assum_coordinates}-\ref{assum_cohyperplane}, the signal source is asymptotically   localizable.
\end{thm}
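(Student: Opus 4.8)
The plan is to decompose the statement into two independent tasks that mirror the two clauses of Definition~\ref{def_asy_loc}: first, proving that the limit function $h(p)=\lim_{n\to\infty}h_n(p)$ exists; and second, proving that this limit attains its minimum uniquely at $p=p^0$. Throughout, I would regard the candidate source $p$ as ranging over the compact region $\mathcal{P}^0$, so that the integrands below remain well behaved.

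For the existence of the limit, first I would rewrite $h_n(p)$ as an integral against the empirical measure $P_n\eq\frac{1}{n}\sum_{i=1}^n\delta_{p_i}$ of the sensors. Setting $g_p(x)\eq(\logten\|x-p\|-\logten\|x-p^0\|)^2$, we have $h_n(p)=\int g_p\,dP_n=\frac{1}{n}\sum_{i=1}^n g_p(p_i)$. Assumption~\ref{assum_coordinates}, through the boundedness and disjointness of $\mathcal{P}$ and $\mathcal{P}^0$, guarantees that for each fixed $p$ the distances $\|x-p\|$ and $\|x-p^0\|$ stay bounded away from zero and from above as $x$ ranges over $\mathcal{P}$; consequently $g_p(\cdot)$ is a bounded continuous function on the support of $\mu$. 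Since Assumption~\ref{assum_cohyperplane}(i) supplies the weak convergence $F_n\Rightarrow F_\mu$, the Portmanteau theorem yields $h_n(p)\to h(p)\eq\int g_p\,d\mu$ pointwise in $p$, which is exactly the required limit function.

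For the uniqueness of the minimizer, note that $h(p)\geq 0$ for all $p$ and $h(p^0)=0$, so $p^0$ is automatically a global minimizer, and it suffices to show it is the only one. Suppose $h(p)=0$. Nonnegativity of the integrand forces $\|x-p\|=\|x-p^0\|$ for $\mu$-almost every $x$. Expanding the squared norms, this equality is equivalent to the affine condition $2\langle x,\,p^0-p\rangle=\|p^0\|^2-\|p\|^2$. If $p\neq p^0$, this set is a hyperplane---a line when $m=2$ and a plane when $m=3$---so the locus $\mathcal{P}'\eq\{x\in\mathcal{P}:\|x-p\|=\|x-p^0\|\}$ would satisfy $\mu(\mathcal{P}')=1$ while lying entirely on a line or plane, directly contradicting the second part of Assumption~\ref{assum_cohyperplane}(ii). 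Hence $p=p^0$, establishing that the minimizer is unique.

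The hard part will be the rigorous justification of the convergence $h_n(p)\to h(p)$: one must verify that $g_p$ is genuinely bounded and continuous on $\mathrm{supp}(\mu)\subseteq\mathcal{P}$ before the Portmanteau argument applies, and this boundedness of the otherwise-singular logarithmic integrand rests entirely on the geometric separation between source and sensor regions provided by Assumption~\ref{assum_coordinates}. By contrast, the uniqueness step reduces to the elementary fact that the set of points equidistant from two distinct points is a hyperplane, so I expect it to be routine once the measure-theoretic convergence and the identification of the limit are secured.
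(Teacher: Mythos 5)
Your proposal is correct and follows essentially the same route as the paper: identify $h_n(p)$ as an integral of the bounded continuous function $(\logten\|x-p\|-\logten\|x-p^0\|)^2$ against the empirical measure, pass to the limit via weak convergence (the paper packages the Portmanteau step into its auxiliary Lemma on empirical means, which also upgrades the convergence to uniform over $\mathcal{P}^0$ via Arzel\`a--Ascoli, though only pointwise existence of the limit is needed for Definition~\ref{def_asy_loc}), and then observe that $h(p)=0$ forces $\mu$-almost all sensors onto the perpendicular bisector hyperplane of $p$ and $p^0$, contradicting Assumption~\ref{assum_cohyperplane}(ii). No gaps.
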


\section{Maximum likelihood estimator and two-step estimator}
\label{secmlts}
This section first derives the ML estimator for the RSS-based localization model defined in \eqref{RSS} and rigorously proves its consistency and asymptotic efficiency. Subsequently, we propose a two-step estimator to  asymptotically aproximate the global  maximum of the non-convex optimization problem associated with the ML estimator.
 
\subsection{The maximum likelihood estimator}

Based on Assumption \ref{assum_noise}(ii), the log-likelihood function of the model \eqref{RSS} is given by
\begin{equation*}
    \ell_n(p) \eq -n\ln\left(\frac{\sqrt{2\pi}\sigma}{10\alpha}\right) - \frac{50\alpha^2}{\sigma^2}\sum_{i=1}^n\left( y_i - \logten(\|p_i-p\|) \right)^2,
\end{equation*}
where $\ln(\cdot)$ denotes the natural logarithm of a positive number.
The associated ML estimation problem is
\begin{equation}
    \label{ML}
    {\rm \textbf{ML}}: \quad \min_{p \in \mathbb{R}^m} \frac{1}{n}\sum_{i=1}^n\left( y_i - \logten(\|p_i-p\|) \right)^2,
\end{equation}
  which is equivalent to maximize $\ell_n(p)$. The ML estimator, denoted by $\widehat{p}_n^{\rm ML}$, is the point that maximizes $\ell_n(p)$.

Let $\nabla f_i(p)$ be the gradient vector of $f_i(p)$ with respect to $p$, i.e.,\begin{align*}
    \nabla f_{i}(p) \eq \frac{\partial f_i(p)}{\partial p}= \frac{(p-p_i)}{\|p_i-p\|^2\ln(10)}.
\end{align*} For proving the asymptotic efficiency of the ML estimator \eqref{ML}, we  first  derive its asymptotic variance, which is given in the following lemma.
\begin{lem}
    \label{lem_converge_M}
Under that Assumptions \ref{assum_coordinates}-\ref{assum_cohyperplane}, we have 
    \begin{enumerate}[(i)]
        \item  The  average $ \frac{1}{n}\sum_{i=1}^n\nabla f_i(p)\nabla f_i(p)^T$ converges uniformly for $p \in \mathcal{P}^0$  as $n \to \infty$.
        \item Moreover, the limit matrix $M^0\eq$ $\lim_{n\to \infty}\frac{1}{n}\sum_{i=1}^n\nabla f_i(p^0) \nabla f_i(p^0)^T$ is nonsingular.
    \end{enumerate}
 
\end{lem}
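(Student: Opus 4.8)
The plan is to treat each summand as the evaluation of a fixed matrix-valued kernel against the empirical measure of the sensors, and then pass to the limit using the weak convergence $F_n \to F_\mu$ granted by Assumption~\ref{assum_cohyperplane}(i). Writing $g(p,q) \eq \frac{(p-q)(p-q)^T}{\|p-q\|^4(\ln 10)^2}$, the gradient formula gives $\nabla f_i(p)\nabla f_i(p)^T = g(p,p_i)$, so that $\frac{1}{n}\sum_{i=1}^n \nabla f_i(p)\nabla f_i(p)^T = \int g(p,q)\,dP_n(q)$ with $P_n$ the empirical distribution of Definition~\ref{def:empirical_dist}. The first thing I would record is that, because $\mathcal{P}^0$ and $\mathcal{P}$ are bounded and disjoint (Assumption~\ref{assum_coordinates}), their closures are separated by $\delta \eq \inf_{p \in \mathcal{P}^0,\, q \in \overline{\mathcal{P}}}\|p-q\| > 0$. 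Consequently $g$ is continuous and bounded on the compact product $\mathcal{P}^0 \times \overline{\mathcal{P}}$; this quarantines the only possible singularity (at $p=q$) and is what makes every subsequent step go through.

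For part (i), pointwise convergence $\frac{1}{n}\sum_{i=1}^n g(p,p_i) \to M(p) \eq \int g(p,q)\,d\mu(q)$ for each fixed $p \in \mathcal{P}^0$ is immediate from $F_n \to F_\mu$ together with the bounded continuity of $q \mapsto g(p,q)$. To upgrade this to uniform convergence over $\mathcal{P}^0$, I would invoke equicontinuity: uniform continuity of $g$ on the compact product set furnishes a modulus of continuity in the $p$-variable that is independent of $q$, so the family $\{p \mapsto \frac{1}{n}\sum_i g(p,p_i)\}_n$ and the limit $M(\cdot)$ are all equicontinuous. Covering the compact set $\mathcal{P}^0$ by finitely many balls of small radius, applying pointwise convergence at their centers, and bounding the oscillation inside each ball by the common modulus then yields $\sup_{p \in \mathcal{P}^0}\|\frac{1}{n}\sum_i g(p,p_i) - M(p)\| \to 0$, which is the asserted uniform convergence.

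For part (ii), the limit matrix is $M^0 = M(p^0) = \frac{1}{(\ln 10)^2}\int \frac{(p^0-q)(p^0-q)^T}{\|p^0-q\|^4}\,d\mu(q)$, manifestly symmetric positive semidefinite. I would establish strict positive definiteness by contradiction: if $v^T M^0 v = 0$ for some $v \neq 0$, then since $v^T M^0 v = \frac{1}{(\ln 10)^2}\int \frac{(v^T(p^0-q))^2}{\|p^0-q\|^4}\,d\mu(q)$ is the integral of a nonnegative function, its integrand must vanish $\mu$-almost everywhere, forcing $v^T(p^0-q) = 0$ for $\mu$-a.e.\ $q$. This says $\mu$-almost every sensor lies on the affine hyperplane $\{q : v^T q = v^T p^0\}$, i.e.\ on a line when $m=2$ or on a plane when $m=3$, which is precisely the degenerate configuration excluded by the non-cohyperplanarity condition in Assumption~\ref{assum_cohyperplane}(ii). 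Hence no such $v$ exists, $M^0 \succ 0$, and in particular $M^0$ is nonsingular.

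The main obstacle is the uniform-convergence step in part (i): one must carefully contain the $1/\|p-q\|^2$-type growth of the kernel using the strict separation $\delta > 0$, and then argue that the bare weak convergence of $F_n$ (which by itself only delivers pointwise limits) can be promoted to uniformity in $p$ through the equicontinuity supplied by compactness. By contrast, part (ii) is a short measure-theoretic and geometric argument that feeds the vanishing quadratic form directly into the non-cohyperplanarity assumption.
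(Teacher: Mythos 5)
Your proposal is correct and follows essentially the same route as the paper: part (i) is handled by observing that the kernel $(\tilde p,p)\mapsto (p-\tilde p)(p-\tilde p)^T/(\|\tilde p-p\|^4\ln^2(10))$ is continuous and bounded on $\mathcal{P}\times\mathcal{P}^0$ and invoking weak convergence of the empirical sensor distribution upgraded to uniformity by equicontinuity (the paper packages exactly this Arzel\`a--Ascoli argument as its auxiliary Lemma~\ref{lem_converge_samplemean}), and part (ii) is the same contradiction in which a vanishing quadratic form forces $\mu$-almost all sensors onto a hyperplane, violating Assumption~\ref{assum_cohyperplane}(ii).
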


Now, we present the asymptotic efficiency of the ML estimator. 
The proof is straightforward by checking the conditions in \cite[Theorem 3]{Jennrich1969}, and we omit the proof.

\begin{prop}
\label{thm_consis_asymnormal}
    Under Assumptions \ref{assum_noise}-\ref{assum_cohyperplane}, we have $\widehat{p}_n^{\rm ML} \to p^0$ almost surely as $n \to \infty$ with the asymptotic normality
    \begin{equation}
        \label{asy_normal}
        \sqrt{n}(\widehat{p}_n^{\rm ML}-p^0) \to \mathcal{N}\left(0,\frac{\sigma^2}{100\alpha^2}\left(M^0\right)^{-1}\right)   {\rm as} ~ n\to \infty.
    \end{equation}
\end{prop}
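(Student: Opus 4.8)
The plan is to recognize that, under the Gaussian noise of Assumption \ref{assum_noise}(ii), the ML estimator $\widehat{p}_n^{\rm ML}$ coincides with the nonlinear least-squares estimator minimizing $Q_n(p) = \frac{1}{n}\sum_{i=1}^n (y_i - f_i(p))^2$ over the equivalent model \eqref{RSS}, where $f_i(p) = \logten(\|p_i - p\|)$ and $y_i = f_i(p^0) + \omega_i$ with $\omega_i$ i.i.d.\ mean-zero of variance $\sigma_\omega^2 \eq \sigma^2/(100\alpha^2)$. I would then verify the hypotheses of Jennrich's asymptotic theory for nonlinear least squares (the cited \cite[Theorem 3]{Jennrich1969}), most of which are already supplied by the earlier results. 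Throughout, Assumption \ref{assum_coordinates} is what guarantees the parameter set $\mathcal{P}^0$ is compact and, crucially, that the distances $\|p_i - p\|$ stay bounded away from zero uniformly over $p \in \mathcal{P}^0$ (since $\mathcal{P}^0 \cap \mathcal{P} = \emptyset$), so that each $f_i$ and its first two derivatives are continuous and uniformly bounded on $\mathcal{P}^0$.

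For consistency, the key requirement is identifiability of $p^0$. The decomposition $Q_n(p) = \frac{1}{n}\sum_i \omega_i^2 - \frac{2}{n}\sum_i \omega_i (f_i(p)-f_i(p^0)) + h_n(p)$ shows that the deterministic part of the criterion is exactly the function $h_n$ of Definition \ref{def_asy_loc}; by Theorem \ref{lem_uniq} its limit $h$ exists and is uniquely minimized at $p^0$. Together with a uniform law of large numbers controlling the noise cross-term (again using the uniform boundedness of $f_i - f_i(p^0)$ and $\omega_i$ i.i.d.), this yields $\widehat{p}_n^{\rm ML} \to p^0$ almost surely, which is precisely the consistency conclusion of Jennrich's theorem.

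For the asymptotic normality, I would expand the first-order optimality condition $\nabla Q_n(\widehat{p}_n^{\rm ML}) = 0$ about $p^0$, writing $0 = \nabla Q_n(p^0) + \nabla^2 Q_n(\bar p_n)(\widehat{p}_n^{\rm ML}-p^0)$ with $\bar p_n$ on the segment joining $\widehat{p}_n^{\rm ML}$ and $p^0$. The scaled score is $\sqrt{n}\,\nabla Q_n(p^0) = -\frac{2}{\sqrt{n}}\sum_{i=1}^n \omega_i \nabla f_i(p^0)$, a sum of independent mean-zero vectors whose normalized covariance converges to $\sigma_\omega^2 M^0$ by Lemma \ref{lem_converge_M}; a Lindeberg/Lyapunov central limit theorem then gives $\sqrt{n}\,\nabla Q_n(p^0) \to \mathcal{N}(0, 4\sigma_\omega^2 M^0)$. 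For the Hessian, $\frac{1}{2}\nabla^2 Q_n(p) = \frac{1}{n}\sum_i \nabla f_i(p)\nabla f_i(p)^T - \frac{1}{n}\sum_i (y_i - f_i(p))\nabla^2 f_i(p)$; the first term converges uniformly to $M^0$ at $p = p^0$ by Lemma \ref{lem_converge_M}(i)--(ii), while the second term vanishes in probability because at $p = p^0$ it reduces to a noise-weighted average $\frac{1}{n}\sum_i \omega_i \nabla^2 f_i(p^0)$ with mean zero, and consistency plus uniform bounds transfer this to $\bar p_n$. Slutsky's lemma then yields $\sqrt{n}(\widehat{p}_n^{\rm ML}-p^0) \to \mathcal{N}(0, \sigma_\omega^2 (M^0)^{-1})$, and substituting $\sigma_\omega^2 = \sigma^2/(100\alpha^2)$ gives exactly \eqref{asy_normal}. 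Asymptotic efficiency then follows by checking that this limiting covariance equals the limit of $n$ times the inverse Fisher information: the per-sample Fisher information of \eqref{RSS} is $\frac{1}{\sigma_\omega^2}\sum_i \nabla f_i(p^0)\nabla f_i(p^0)^T$, whose inverse scaled by $n$ tends to $\sigma_\omega^2 (M^0)^{-1}$, the CRLB.

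I expect the main obstacle to be the uniform-in-$p$ control of the Hessian's noise-weighted term and the verification that its limit is nonsingular; but both are precisely what Lemma \ref{lem_converge_M} and the separation guaranteed by Assumption \ref{assum_coordinates} are designed to deliver, so once those are in hand the argument reduces to assembling the standard NLS limit theory. The genuinely nontrivial input is the identifiability step, i.e.\ that $h$ has a \emph{unique} minimizer at $p^0$, which is supplied by Theorem \ref{lem_uniq} and is exactly why asymptotic localizability is established first.
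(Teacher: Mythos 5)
Your proposal follows essentially the same route as the paper, which simply verifies the hypotheses of Jennrich's Theorem 3 (identifiability via Theorem \ref{lem_uniq}, the nonsingular limit matrix $M^0$ via Lemma \ref{lem_converge_M}, and compactness/separation via Assumption \ref{assum_coordinates}) and omits the details. Your fleshed-out score/Hessian expansion and the covariance bookkeeping are correct and consistent with \eqref{asy_normal}.
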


The limit matrix $M^0$ is tightly related to the Fisher information matrix $F$ of model \eqref{RSS}. To see this, firstly we have
\begin{equation*}
    \frac{\partial \ell_n(p^0)}{\partial p^0} = \frac{100\alpha^2}{\sigma^2}\sum_{i=1}^n \omega_i \frac{p^0-p_i}{\|p_i-p^0\|^2\ln(10)}.
\end{equation*}
Then we obtain the Fisher information matrix
\begin{align*}
    F =& \mathbb{E}\left[ \frac{\partial \ell_n(p^0)}{\partial p^0} \left(\frac{\partial \ell_n(p^0)}{\partial p^0}\right)^T \right]\\
    =& \frac{100\alpha^2}{\sigma^2} \sum_{i=1}^n\frac{1}{\|p_i-p^0\|^4\ln^2(10)}(p^0-p_i)(p^0-p_i)^T.
\end{align*}
This means $\lim_{n\to \infty}nF^{-1} = \frac{\sigma^2}{100\alpha^2} \left(M^0\right)^{-1}$, which implies that the ML estimator $\widehat{p}_n^{\rm ML}$ is asymptotically efficient. Notice that the CRLB is given by the trace of the inverse of the Fisher
 information matrix, i.e., CRLB $= {\rm tr}(F^{-1})$. From \eqref{asy_normal}, we have $\widehat{p}_n^{\rm ML}$ converges to the true value $p^0$ with the asymptotic covariance of $\frac{\sigma^2}{100\alpha^2}\left(M^0\right)^{-1}/n$.


\subsection{The two-step estimator}
\label{sec2c}
The ML estimator owns consistency and asymptotic efficiency, but it is difficult to obtain the  global solution to the ML problem~\eqref{ML} due to its non-convexity. In this subsection, we introduce a two-step estimation scheme, which can realize the same asymptotic property that the ML estimator possesses.

Firstly, we prove that the  objective function of the ML problem converges to a function that is convex in a small neighborhood around $p^0$, which forms the feasibility of the two-step scheme. 

\begin{prop}
    \label{thm_converge_ML_objfunc}
Under Assumptions \ref{assum_noise}-\ref{assum_cohyperplane}, $\ell_n(p)/n$ converges uniformly to 
\begin{equation*}
    \ell(p)\eq -\ln\left(\frac{\sqrt{2\pi}\sigma}{10\alpha}\right) -\frac{50\alpha^2}{\sigma^2} h(p) -\frac12
\end{equation*}
on $\mathcal{P}^0$  as $n \to \infty$. In addition, $\nabla^2(-\ell(p^0)) = 100\alpha^2 M^0/\sigma^2$.
\end{prop}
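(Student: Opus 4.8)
The plan is to substitute the model \eqref{RSS} into the ML objective and expand the square, separating the deterministic part from the noise-driven part. Since $y_i = f_i(p^0)+\omega_i$, writing $y_i - f_i(p) = (f_i(p^0)-f_i(p)) + \omega_i$ and squaring gives
\[
\frac{1}{n}\sum_{i=1}^n\big(y_i - f_i(p)\big)^2 = h_n(p) + \frac{2}{n}\sum_{i=1}^n \omega_i\big(f_i(p^0)-f_i(p)\big) + \frac{1}{n}\sum_{i=1}^n \omega_i^2 ,
\]
so that $\ell_n(p)/n = -\ln(\sqrt{2\pi}\sigma/(10\alpha)) - (50\alpha^2/\sigma^2)$ times the right-hand side. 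First I would identify the limit of each of the three pieces: $h_n(p)\to h(p)$ by the asymptotic-localizability construction behind Theorem \ref{lem_uniq}; the cross term tends to $0$; and the pure-noise average tends to $\mathbb{E}[\omega_i^2]=\sigma^2/(100\alpha^2)$ by the strong law of large numbers. Substituting yields the limit $-(50\alpha^2/\sigma^2)\big(h(p)+\sigma^2/(100\alpha^2)\big)$, and because $(50\alpha^2/\sigma^2)\cdot\sigma^2/(100\alpha^2)=\tfrac12$, the constant $-\tfrac12$ in $\ell(p)$ emerges exactly as claimed.

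The substantive work is upgrading these limits to hold \emph{uniformly} in $p$ on the compact set $\mathcal{P}^0$. For the cross term, pointwise almost-sure convergence to $0$ at each fixed $p$ follows from Kolmogorov's SLLN, since the summands $\omega_i(f_i(p^0)-f_i(p))$ are independent, mean zero, and have uniformly bounded variances (the factors $f_i(p^0)-f_i(p)$ are uniformly bounded because, by Assumption \ref{assum_coordinates}, the distances $\|p_i-p\|$ are bounded above and bounded away from zero). To promote this to uniform convergence, I would use that $\nabla f_i(p)=(p-p_i)/(\|p_i-p\|^2\ln(10))$ has norm $1/(\|p_i-p\|\ln(10))$, which is uniformly bounded thanks to the positive separation between the disjoint bounded sets $\mathcal{P}^0$ and $\mathcal{P}$; hence each $f_i$, and therefore both $h_n(p)$ and the cross term, are Lipschitz in $p$ with a common deterministic constant. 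Combined with compactness of $\mathcal{P}^0$, a standard finite-net plus equicontinuity argument turns the pointwise a.s.\ limits into uniform a.s.\ limits with a single exceptional null set, while the $p$-independent noise average $\tfrac1n\sum_i\omega_i^2$ converges uniformly trivially.

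For the Hessian identity, note $-\ell(p)=\ln(\sqrt{2\pi}\sigma/(10\alpha)) + (50\alpha^2/\sigma^2)h(p)+\tfrac12$, so $\nabla^2(-\ell(p^0))=(50\alpha^2/\sigma^2)\nabla^2 h(p^0)$, and it remains to compute $\nabla^2 h(p^0)$. Differentiating $h_n$ termwise with $g_i(p)=f_i(p)-f_i(p^0)$ gives $\nabla^2 g_i^2 = 2\nabla f_i(p)\nabla f_i(p)^T + 2g_i(p)\nabla^2 f_i(p)$; evaluated at $p=p^0$ the factor $g_i(p^0)=0$ annihilates the second term, leaving
\[
\nabla^2 h_n(p^0) = \frac{2}{n}\sum_{i=1}^n \nabla f_i(p^0)\nabla f_i(p^0)^T \to 2M^0
\]
by Lemma \ref{lem_converge_M}. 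The interchange of $\nabla^2$ with the $n\to\infty$ limit is justified by the uniform convergence of the Hessians $\nabla^2 h_n$ on a neighborhood of $p^0$, which follows from Lemma \ref{lem_converge_M}(i) for the $\nabla f_i\nabla f_i^T$ part together with uniform convergence of the remaining term that vanishes at $p^0$; this certifies that $h$ is $C^2$ near $p^0$ with $\nabla^2 h(p^0)=2M^0$, whence $\nabla^2(-\ell(p^0))=(50\alpha^2/\sigma^2)\cdot 2M^0 = 100\alpha^2 M^0/\sigma^2$. The main obstacle throughout is this uniform-convergence bookkeeping rather than any single computation: upgrading the noise-coupled SLLN limits to be uniform in $p$, and legitimizing the commutation of $\nabla^2$ with the limit, both of which rest on the uniform boundedness and equi-Lipschitz behavior of $\{f_i\}$ and their derivatives guaranteed by the positive separation in Assumption \ref{assum_coordinates}.
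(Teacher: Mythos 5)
Your proposal is correct and follows essentially the same route as the paper: the same decomposition $y_i-f_i(p)=(f_i(p^0)-f_i(p))+\omega_i$, the same three limits (with the cross term handled by an SLLN-type bound as in Lemma \ref{lem_sqrtn} and the constant $-\tfrac12$ arising from $\tfrac1n\sum\omega_i^2\to\sigma^2/(100\alpha^2)$), and the same reliance on uniform convergence of second derivatives plus Lemma \ref{lem_converge_M} to justify interchanging $\nabla^2$ with the limit. The only (harmless, arguably cleaner) variation is that you compute the Hessian of the deterministic $h_n$ directly, using $g_i(p^0)=0$ to kill the $\nabla^2 f_i$ term, whereas the paper differentiates the random objective $-\ell_n(p)/n$ and passes to the limit.
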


Proposition \ref{thm_converge_ML_objfunc} indicates that $-\ell_n(p)/n$ is a convex function in a small neighborhood around the global minimum $p^0$ when $n$ is large. Therefore, local iterative methods—such as GN iterations—can be used to find the global minimum of the non-convex problem, provided that the initial value lies within this region of attraction.
Notably, a consistent estimator can approach the true value arbitrarily closely as 
$n$ increases.
Based on this property, the two-step estimator is formally defined as follows \cite{Gourieroux1995,Mu2017}:

\textbf{Step 1.} Derive a  consistent estimator $\widehat{p}_n$ for the source's coordinates $p^0$.

\textbf{Step 2.} Run the GN iteration algorithm with this consistent estimator $\widehat{p}_n$ as its initial value.

In Step 2, the GN iteration algorithm for the ML problem \eqref{ML} has the following form:
\begin{align} \label{gn}
\widehat{p}_n(k+1) &= \widehat{p}_n(k) +  \left(J(k)^T J(k)\right)^{-1}J(k)^T(y - f(k)),\!\!
\end{align}
where 
\begingroup
\allowdisplaybreaks
\begin{align*}
J(k)& = \left[\nabla f_1(\widehat{p}_n(k))^T, ~ \cdots, ~ \nabla f_n(\widehat{p}_n(k))^T\right]^T,\\
 f(k)& = [f_1(\widehat{p}_n(k)), ~ \cdots, ~ f_n(\widehat{p}_n(k))]^T,\\
    y &=  [y_1,~\cdots, ~y_n]^T,~~
    \widehat{p}_n(0) = \widehat{p}_n.
\end{align*}
\endgroup
 The two-step scheme described above has the appealing property that a single step of the GN iteration is sufficient to theoretically ensure that the resulting two-step estimator inherits the same asymptotic properties as the ML estimator \cite{Mu2017,Lehmann1998}. This result is summarized in the following lemma.

\begin{lem}{\cite[Theorem 2]{Mu2017}\cite[Chapter 6, Theorem 4.3]{Lehmann1998}}
\label{lem_onestepGN}
    Suppose that $\widehat{p}_n$ is a $\sqrt{n}$-consistent estimator of $p^0$, i.e., $\widehat{p}_n-p^0 = O_p(1/\sqrt{n})$. Denote the one-step GN iteration of $\widehat{p}_n$ by $\widehat{p}_n^{\rm GN}\eq\widehat{p}_n(1)$. Then under Assumptions \ref{assum_noise}-\ref{assum_cohyperplane}, we have
$
        \widehat{p}_n^{\rm GN}-\widehat{p}_n^{\rm ML} = o_p(1/\sqrt{n}).
$
    This indicates that $\widehat{p}_n^{\rm GN}$ has the same asymptotic property that $\widehat{p}_n^{\rm ML}$ possesses.
\end{lem}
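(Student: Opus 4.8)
The plan is to exploit the first-order optimality condition satisfied by $\widehat{p}_n^{\rm ML}$ together with a mean-value expansion of the Gauss--Newton correction term around $\widehat{p}_n^{\rm ML}$. Writing the normalized gradient of the ML objective as $G_n(p) \eq \frac1n \sum_{i=1}^n (y_i - f_i(p))\nabla f_i(p)$ and the normalized Gram matrix as $M_n(p) \eq \frac1n \sum_{i=1}^n \nabla f_i(p)\nabla f_i(p)^T$, a short computation shows the single GN step \eqref{gn} is equivalent to $\widehat{p}_n^{\rm GN} = \widehat{p}_n + M_n(\widehat{p}_n)^{-1} G_n(\widehat{p}_n)$, since $J(0)^TJ(0) = nM_n(\widehat{p}_n)$ and $J(0)^T(y-f(0)) = nG_n(\widehat{p}_n)$. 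Because $\widehat{p}_n^{\rm ML}$ minimizes the ML objective and $p^0$ is interior to $\mathcal{P}^0$, for large $n$ the stationarity condition gives $G_n(\widehat{p}_n^{\rm ML}) = 0$. Subtracting $\widehat{p}_n^{\rm ML}$ from both sides, the goal reduces to showing $(\widehat{p}_n - \widehat{p}_n^{\rm ML}) + M_n(\widehat{p}_n)^{-1} G_n(\widehat{p}_n) = o_p(1/\sqrt n)$.

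First I would apply a componentwise mean value theorem to $G_n$ between $\widehat{p}_n$ and $\widehat{p}_n^{\rm ML}$. Since $G_n(\widehat{p}_n^{\rm ML}) = 0$, this yields $G_n(\widehat{p}_n) = \nabla G_n(\tilde p)(\widehat{p}_n - \widehat{p}_n^{\rm ML})$ for an intermediate point $\tilde p$ on the segment joining the two estimators (strictly, a distinct intermediate point per row). Direct differentiation gives $\nabla G_n(p) = -M_n(p) + R_n(p)$ with $R_n(p) \eq \frac1n \sum_{i=1}^n (y_i - f_i(p))\nabla^2 f_i(p)$. Substituting and factoring out $(\widehat{p}_n - \widehat{p}_n^{\rm ML})$, the target expression becomes $\big(I - M_n(\widehat{p}_n)^{-1}M_n(\tilde p) + M_n(\widehat{p}_n)^{-1}R_n(\tilde p)\big)(\widehat{p}_n - \widehat{p}_n^{\rm ML})$, so it suffices to prove the bracketed matrix is $o_p(1)$ while noting that $\widehat{p}_n - \widehat{p}_n^{\rm ML} = O_p(1/\sqrt n)$ (the hypothesis gives $\widehat{p}_n - p^0 = O_p(1/\sqrt n)$ and \eqref{asy_normal} gives $\widehat{p}_n^{\rm ML} - p^0 = O_p(1/\sqrt n)$).

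The matrix part splits into two pieces. Because $\widehat{p}_n \to p^0$ a.s. and $\widehat{p}_n^{\rm ML} \to p^0$ a.s. (Proposition~\ref{thm_consis_asymnormal}), the intermediate point satisfies $\tilde p \to p^0$ as well. The uniform convergence of $M_n(\cdot)$ on $\mathcal{P}^0$ together with continuity (Lemma~\ref{lem_converge_M}(i)) then forces $M_n(\widehat{p}_n) \to M^0$ and $M_n(\tilde p) \to M^0$ in probability; the inverse is well defined for large $n$ by the nonsingularity of $M^0$ (Lemma~\ref{lem_converge_M}(ii)), so $I - M_n(\widehat{p}_n)^{-1}M_n(\tilde p) = o_p(1)$.

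The main obstacle is controlling the residual-weighted Hessian term $R_n(\tilde p)$. I would split the residual as $y_i - f_i(\tilde p) = \omega_i + (f_i(p^0) - f_i(\tilde p))$. The noise part $\frac1n\sum_{i} \omega_i \nabla^2 f_i(\tilde p)$ is handled by a uniform law of large numbers: since $\mathcal{P}^0 \cap \mathcal{P} = \emptyset$ (Assumption~\ref{assum_coordinates}) the distances $\|p_i - p\|$ stay bounded away from zero, so $\nabla^2 f_i(p)$ is uniformly bounded and equicontinuous over $\mathcal{P}^0$, whence $\sup_{p\in\mathcal{P}^0}\|\frac1n\sum_i\omega_i\nabla^2 f_i(p)\| = o_p(1)$ because $\mathbb{E}[\omega_i] = 0$; evaluating at the random point $\tilde p$ preserves this. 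The deterministic part is $O(\|\tilde p - p^0\|) = o_p(1)$ by the same boundedness. Hence $R_n(\tilde p) = o_p(1)$, the bracketed matrix is $o_p(1)$, and multiplication by $\widehat{p}_n - \widehat{p}_n^{\rm ML} = O_p(1/\sqrt n)$ yields the claimed $o_p(1/\sqrt n)$. This is precisely the one-step Gauss--Newton refinement argument of \cite{Mu2017,Lehmann1998} specialized to the present Jacobian, so I would close by invoking those references for the remaining bookkeeping.
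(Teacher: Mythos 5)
The paper does not actually prove this lemma --- it is stated as an import from \cite[Theorem 2]{Mu2017} and \cite[Chapter 6, Theorem 4.3]{Lehmann1998}, so there is no in-paper proof to compare against. Your argument is a correct, self-contained rendering of the standard one-step (Le Cam--type) refinement argument that underlies those citations, specialized to the Jacobian of \eqref{gn}: the identification $\widehat{p}_n^{\rm GN} = \widehat{p}_n + M_n(\widehat{p}_n)^{-1}G_n(\widehat{p}_n)$ is right, the stationarity $G_n(\widehat{p}_n^{\rm ML})=0$ holds with probability tending to one (it suffices that $\widehat{p}_n^{\rm ML}$ eventually avoids the sensor locations, which follows from Proposition \ref{thm_consis_asymnormal} and $\mathcal{P}^0\cap\mathcal{P}=\emptyset$), and the decomposition $\nabla G_n = -M_n + R_n$ with the bracketed matrix shown to be $o_p(1)$ is exactly how the cited theorems are proved. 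Two small points worth tightening if this were written out in full: (i) after the componentwise mean value theorem the matrices $M_n(\tilde p)$ and $R_n(\tilde p)$ have rows evaluated at \emph{different} intermediate points, so the $o_p(1)$ bounds must be stated uniformly over a shrinking neighborhood of $p^0$ rather than at a single random point --- your uniform-convergence phrasing already covers this, but the notation hides it; and (ii) the uniform law of large numbers for $\sup_{p}\|\frac1n\sum_i\omega_i\nabla^2 f_i(p)\|$ needs the same equicontinuity-plus-compactness machinery the paper uses via Lemma \ref{lem_converge_samplemean} and Lemma \ref{lem_sqrtn} in the proof of Proposition \ref{thm_converge_ML_objfunc}, so it is consistent with the paper's level of rigor. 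The conclusion $o_p(1)\cdot O_p(1/\sqrt{n}) = o_p(1/\sqrt{n})$ and the use of \eqref{asy_normal} for $\widehat{p}_n^{\rm ML}-p^0=O_p(1/\sqrt{n})$ are not circular, since Proposition \ref{thm_consis_asymnormal} is established independently. Overall the proposal is sound and in fact supplies a proof the paper omits.
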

 
Lemma \ref{lem_onestepGN} shows that the estimator $\widehat{p}_n^{\rm GN}$ is both consistent and asymptotically efficient. Thus, the success of the two-step estimation scheme hinges on obtaining a $\sqrt{n}$-consistent estimator in the first step. In the next  section, we derive such an estimator.

\section{Deriving $\sqrt{n}$-consistent estimator for RSS-based localization}
\label{sqrtncon}
In this section, we focus on deriving  $\sqrt{n}$-consistent estimators for RSS-based localization, which is achieved by model transformation and least squares estimators.
\subsection{Model transformation}
By multiplying both sides of the model in equation \eqref{RSS} by 2 and then applying the base-10 exponential function, we obtain an equivalent form of the model in equation \eqref{RSS}
\begin{equation}
\label{eq8}
    10^{2y_i} = d_i^2 10^{2\omega_i}, \quad i=1,...,n.
\end{equation}
Note that \( \{ \omega_i \}_{i=1}^n \) are i.i.d. Gaussian random variables. Therefore, \( \{ 10^{2\omega_i} \}_{i=1}^n \) are i.i.d. lognormal random variables \cite{Johnson1994}, and their mean and variance are provided in the following lemma.
\begin{lem}
\label{lem_eta}
Under Assumption \ref{assum_noise}, 
the  sequence $\{10^{2\omega_i}\}_{i=1}^n$ has  mean
$b \eq \mathbb{E}(10^{2\omega_i})= e^{(\ln 10)^2\sigma^2/(50\alpha^2)} >1$
and variance  
    $ \mathbb{V} (10^{2\omega_i}) = b^2(b^2-1) < \infty$.
\end{lem}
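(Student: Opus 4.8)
The plan is to reduce the claim to the standard moment formulas for a lognormal random variable by rewriting the base-10 exponential in terms of the natural exponential. First I would write
\[
10^{2\omega_i} = e^{(2\ln 10)\,\omega_i},
\]
and recall from the discussion preceding \eqref{RSS} that under Assumption \ref{assum_noise} each $\omega_i$ is Gaussian with mean zero and variance $\sigma^2/(100\alpha^2)$. Consequently $X_i \eq (2\ln 10)\,\omega_i$ is a centered Gaussian random variable whose variance is
\[
s^2 \eq (2\ln 10)^2 \cdot \frac{\sigma^2}{100\alpha^2} = \frac{(\ln 10)^2\sigma^2}{25\alpha^2}.
\]

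Next I would compute the mean by evaluating the Gaussian moment generating function: for $X \sim \mathcal{N}(0,s^2)$ one has $\mathbb{E}(e^{X}) = e^{s^2/2}$, which gives
\[
b = \mathbb{E}(10^{2\omega_i}) = e^{s^2/2} = e^{(\ln 10)^2\sigma^2/(50\alpha^2)},
\]
matching the stated expression. Since $\sigma^2>0$ forces $s^2>0$, the exponent is strictly positive and hence $b>1$. For the variance I would use the second moment $\mathbb{E}\big((10^{2\omega_i})^2\big) = \mathbb{E}(e^{2X}) = e^{2s^2}$, again from the Gaussian moment generating function but evaluated at argument $2$. Observing that $e^{s^2} = b^2$, this yields $\mathbb{E}(e^{2X}) = b^4$, so that
\[
\mathbb{V}(10^{2\omega_i}) = \mathbb{E}(e^{2X}) - \big(\mathbb{E}(e^{X})\big)^2 = b^4 - b^2 = b^2(b^2-1).
\]

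Finiteness is then immediate: because $\sigma^2$ is finite, $s^2$ is finite and therefore $b$ is finite, giving $\mathbb{V}(10^{2\omega_i}) < \infty$. This argument involves no genuine obstacle, as it is a direct specialization of the lognormal moment identities; the only point requiring care is the bookkeeping of the multiplicative constants $\ln 10$, the factor $2$, and $1/(100\alpha^2)$ when passing from $\omega_i$ to $X_i$, so as to track the variance $s^2$ correctly through both the mean and the second-moment computations and confirm that $e^{s^2}$ coincides with $b^2$.
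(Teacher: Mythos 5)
Your proposal is correct and follows essentially the same route as the paper: the paper writes $10^{2\omega_i}=e^{2(\ln 10)\omega_i}$ and invokes the standard lognormal moment formulas (its Lemma B1, with $\lambda^2=(\ln 10)^2\sigma^2/(25\alpha^2)$), which are precisely the Gaussian moment-generating-function evaluations you carry out explicitly. The constants $s^2=\lambda^2$ and the identification $e^{s^2}=b^2$ match the paper's computation exactly.
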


 By Lemma \ref{lem_eta}, we have $\{\eta_i \eq 10^{2\omega_i} - b,~i=1,...,n\}$ is an i.i.d. random variable sequence with mean zero and variance $b^2(b^2-1)$. Note that  $d_i^2 = \|p^0\|^2 - 2p_i^Tp^0 + \|p_i\|^2,~i=1,...,n$.
 We reach the desired model from \eqref{eq8}
\begin{align}
\nonumber
10^{2y_i} &= d_i^2 b + d_i^2 (10^{2\omega_i} -b)\\
&=(\|p^0\|^2 - 2p_i^Tp^0 + \|p_i\|^2)b+d_i^2 \eta_i.\label{lrm1}
\end{align}

\subsection{Least squares estimator with known noise variance}

When the noise variance \(\sigma^2\) is known, 
we  define $\theta^0 \eq [(p^0)^T,~\|p^0\|^2]^T$ and obtain from \eqref{lrm1}
\begin{equation}
\label{para_model_1}
    10^{2y_i}-b\|p_i\|^2 = b[-2p_i^T,~1]\theta^0 + d_i^2\eta_i,~ i=1,...,n.
\end{equation}
We write \eqref{para_model_1} as the compact form 
\begin{equation}
    Y = X\theta^0+V, \label{lrm}
\end{equation}
where
\begin{equation*}
    Y = \begin{bmatrix} 10^{2y_1}-b\|p_1\|^2 \\ \vdots \\ 10^{2y_n}-b\|p_n\|^2\end{bmatrix},~ X=  b\begin{bmatrix} -2p_1^T & 1\\ \vdots & \vdots \\ -2p_n^T & 1 \end{bmatrix}, ~ V = \begin{bmatrix} d_1^2\eta_1 \\ \vdots \\ d_n^2\eta_n\end{bmatrix}.
\end{equation*}
In this case, the constant \(b\) is determined according to Lemma \ref{lem_eta}. Given that \(\{d_i^2 \eta_i\}_{i=1}^n\) is an independent random variable sequence with zero mean and finite variance, the LS estimator of the model in \eqref{lrm} is consistent and \(\sqrt{n}\)-consistent, provided that the Gram matrix \(X^TX/n\) is nonsingular. This is established in the following proposition.
\begin{prop}
\label{prop_invertgram_1}
    Under Assumptions \ref{assum_noise}-\ref{assum_cohyperplane} , the Gram matrix $X^TX/n$ is nonsingular for any finite $n$. Moreover, the limit $\lim_{n \to \infty}X^TX/n$ exists and is nonsingular.
\end{prop}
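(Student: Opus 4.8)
The plan is to exploit the rank-one structure of the rows of $X$ and reduce nonsingularity to a non-degeneracy statement about the sensor geometry. Writing $u_i \eq [-2p_i^T,\,1]^T \in \mathbb{R}^{m+1}$, the row of $X$ indexed by $i$ is $b\,u_i^T$, so $X^TX/n = b^2\big(\tfrac1n\sum_{i=1}^n u_i u_i^T\big)$. Since $b^2>0$ by Lemma \ref{lem_eta}, it suffices to study the positive semidefinite matrix $A_n \eq \tfrac1n\sum_{i=1}^n u_i u_i^T$. For a generic vector $c=[a^T,\beta]^T$ with $a\in\mathbb{R}^m$ and $\beta\in\mathbb{R}$, the quadratic form is $c^TA_nc = \tfrac1n\sum_{i=1}^n(\beta-2a^Tp_i)^2\ge0$; hence $A_n$ is singular precisely when some nonzero $c$ makes every summand vanish.

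For the finite-$n$ claim I would argue by contradiction. If $A_n$ were singular, there would be a nonzero $c=[a^T,\beta]^T$ with $2a^Tp_i=\beta$ for all $i=1,\dots,n$. The case $a=0$ forces $\beta=0$, contradicting $c\neq0$; and when $a\neq0$ the identity $2a^Tp_i=\beta$ says that all sensors lie on a common hyperplane---a line for $m=2$ and a plane for $m=3$. This is exactly what the first clause of Assumption \ref{assum_cohyperplane}(ii) forbids, so $A_n$, and therefore $X^TX/n$, is nonsingular for every finite $n$.

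For the limit I would pass to the empirical distribution. Define the matrix-valued map $g(p)\eq u(p)u(p)^T$ with $u(p)\eq[-2p^T,\,1]^T$, so that $A_n = \int g\,dP_n$, where $P_n$ is the empirical measure of $p_1,\dots,p_n$. Because $\mathcal{P}$ is bounded (Assumption \ref{assum_coordinates}), $g$ is continuous and bounded on $\mathcal{P}$; the weak convergence $F_n\to F_\mu$ from Assumption \ref{assum_cohyperplane}(i) then yields $A_n\to\int g\,d\mu$ entrywise, and hence $\lim_{n\to\infty}X^TX/n = b^2\int g\,d\mu$ exists.

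For nonsingularity of this limit I would repeat the quadratic-form argument against the measure $\mu$: a nonzero null vector $c=[a^T,\beta]^T$ would give $\int(\beta-2a^Tp)^2\,d\mu=0$, forcing $\beta-2a^Tp=0$ for $\mu$-almost every $p$. With $a=0$ this again contradicts $c\neq0$, while with $a\neq0$ it means a subset of $\mathcal{P}$ of $\mu$-measure one lies on a single line (for $m=2$) or plane (for $m=3$). This is ruled out by the second clause of Assumption \ref{assum_cohyperplane}(ii), giving the claim. I expect this last step to be the main obstacle: the finite-$n$ non-degeneracy does not by itself prevent the limiting measure $\mu$ from concentrating on a lower-dimensional affine set, so the measure-theoretic half of Assumption \ref{assum_cohyperplane}(ii) is essential, and the only real care needed is justifying the interchange of limit and integration through the boundedness of $g$ together with weak convergence.
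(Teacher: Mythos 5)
Your proposal is correct and follows essentially the same route as the paper's own proof: a quadratic-form argument reducing singularity of $X^TX/n$ to all sensors lying on a common hyperplane (contradicting the finite-$n$ clause of Assumption \ref{assum_cohyperplane}(ii)), convergence of the Gram matrix via weak convergence of the empirical distribution and boundedness of the integrand on $\mathcal{P}$, and the analogous $\mu$-almost-sure hyperplane argument against the second clause for the limit. The only cosmetic difference is that you factor out the rank-one structure $u_iu_i^T$ and treat the $a=0$ case explicitly, which the paper handles implicitly.
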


Based on Proposition \ref{prop_invertgram_1}, we define the LS estimator for the model \eqref{lrm} as follows:
\begin{equation}
\label{hattheta_knownvar}
    \widehat{\theta}_n = (X^TX)^{-1}X^TY.
\end{equation}
Thus, the true source is estimated by the first $m$ elements of $\widehat{\theta}_n$, defined by
\begin{equation}
\label{hatp_knownvar}
    \widehat{p}_n = [\widehat{\theta}_n]_{1:m}.
\end{equation}
\begin{rem}
We disregard the constraint  $\|[\theta^0]_{1:m}\|^2 = [\theta^0]_{m+1}$ in the recovery of the source to avoid solving a non-convex problem, which is typically computationally demanding. This is because our goal in this step is to derive a $\sqrt{n}$-consistent estimator, and the proposed estimator $\widehat{p}_n$ is sufficient for this purpose.
\end{rem}
We demonstrate that the proposed estimator \eqref{hatp_knownvar} for the source is $\sqrt{n}$-consistent, as stated in the following theorem.
\begin{thm}
\label{thm_sqrtn_esti_1}
    Under Assumptions \ref{assum_noise}-\ref{assum_cohyperplane} , it holds that 
    \begin{equation}
        \widehat{p}_n-p^0 = O_p(1/\sqrt{n}).
    \end{equation}
\end{thm}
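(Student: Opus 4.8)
The plan is to exploit the closed form of the LS estimator together with the nonsingular Gram-matrix limit from Proposition \ref{prop_invertgram_1}. Substituting $Y = X\theta^0 + V$ from \eqref{lrm} into \eqref{hattheta_knownvar} gives the exact error decomposition
\[
\widehat{\theta}_n - \theta^0 = (X^TX)^{-1}X^TV = \left(\frac{X^TX}{n}\right)^{-1}\frac{X^TV}{n}.
\]
Since $\widehat{p}_n = [\widehat{\theta}_n]_{1:m}$ and $p^0 = [\theta^0]_{1:m}$ by \eqref{hatp_knownvar}, restricting to the first $m$ coordinates cannot inflate the order, so it suffices to show $\widehat{\theta}_n - \theta^0 = O_p(1/\sqrt{n})$. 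For the first factor, Proposition \ref{prop_invertgram_1} guarantees that $X^TX/n$ converges to a nonsingular limit; hence $(X^TX/n)^{-1}$ converges to a finite deterministic matrix and is in particular $O(1)$. It therefore remains to prove that the noise term satisfies $X^TV/n = O_p(1/\sqrt{n})$.

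For the noise term I would write $X^TV = \sum_{i=1}^n x_i\, d_i^2 \eta_i$, where $x_i^T = b[-2p_i^T,~1]$ is the $i$-th row of $X$ and $\eta_i = 10^{2\omega_i} - b$. By Lemma \ref{lem_eta} the $\eta_i$ are i.i.d.\ with mean zero and finite variance $b^2(b^2-1)$, while the vectors $x_i$ and scalars $d_i^2$ are deterministic, so $\{x_i\, d_i^2 \eta_i\}_{i=1}^n$ is a sequence of independent, mean-zero random vectors. Consequently $\mathbb{E}[X^TV/n] = 0$ and, by independence,
\[
\mathbb{E}\left\| \frac{X^TV}{n}\right\|^2 = \frac{1}{n^2}\sum_{i=1}^n d_i^4\, \|x_i\|^2\, b^2(b^2-1).
\]

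Assumption \ref{assum_coordinates} places $p^0$ and all $p_i$ in bounded, disjoint sets, so $d_i = \|p_i - p^0\|$ and $\|x_i\|^2 = b^2(4\|p_i\|^2+1)$ are uniformly bounded in $i$; hence each summand is bounded by a constant independent of $i$ and $n$, and the whole expression is $O(1/n)$. A Markov/Chebyshev argument then upgrades this second-moment bound to a stochastic order: since $\sqrt{n}\,(X^TV/n)$ has uniformly bounded second moment, it is bounded in probability, i.e.\ $X^TV/n = O_p(1/\sqrt{n})$. Combining the two factors gives $\widehat{\theta}_n - \theta^0 = O(1)\cdot O_p(1/\sqrt{n}) = O_p(1/\sqrt{n})$, and restricting to the first $m$ coordinates yields the claim.

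The computations are routine once the right ingredients are assembled; the only substantive inputs are the nonsingularity of $\lim_n X^TX/n$ (already supplied by Proposition \ref{prop_invertgram_1}), the finiteness of the noise variance (Lemma \ref{lem_eta}), and the boundedness of the geometry (Assumption \ref{assum_coordinates}). I expect no genuine obstacle beyond keeping careful track of the uniform boundedness of $d_i^2\|x_i\|$, which is precisely where the boundedness of the source and sensor sets is essential: without it the summands $d_i^4\|x_i\|^2$ could grow with $i$ and spoil the $1/\sqrt{n}$ rate.
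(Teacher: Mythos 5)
Your proposal is correct and follows essentially the same route as the paper: the exact error decomposition $\widehat{\theta}_n-\theta^0=(X^TX/n)^{-1}(X^TV/n)$, nonsingularity of the Gram-matrix limit from Proposition \ref{prop_invertgram_1}, and a second-moment/Chebyshev bound on $X^TV/n$ using the boundedness of $d_i$ and $p_i$ (Assumption \ref{assum_coordinates}) and the finite variance of $\eta_i$ (Lemma \ref{lem_eta}). The only cosmetic difference is that the paper delegates the Chebyshev step to an auxiliary lemma (Lemma \ref{lem_sqrtn}) applied coordinatewise, whereas you carry out the variance computation explicitly.
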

 
\subsection{Least squares estimator with unknown noise variance}

When the noise variance $\sigma^2$ is unknown, both $b$ and   $X$ in the model \eqref{lrm} are not available.
In this case, we resort to define $\beta^0 \eq b[(p^0)^T,~\|p^0\|^2,~1]^T$ with the constant $b$ being positive by Lemma \ref{lem_eta} and formulate the model \eqref{lrm1} as 
\begin{equation}
\label{para_model_2}
    10^{2y_i} = [-2p_i^T,~1,~\|p_i\|^2]\beta^0+d_i^2\eta_i,~ i=1,...,n.
\end{equation}
Then we obtain the linear regression model
\begin{equation}
    \Gamma = \Phi\beta^0 + V,\label{lrm2}
\end{equation}
where
\begin{equation*}
    \Gamma = \begin{bmatrix} 10^{2y_1}\\ \vdots \\ 10^{2y_n} \end{bmatrix},~\Phi = \begin{bmatrix} -2p_1^T & 1 & \|p_1\|^2 \\ \vdots & \vdots & \vdots \\ -2p_n^T & 1 & \|p_n\|^2 \end{bmatrix}, ~ V = \begin{bmatrix} d_1^2\eta_1 \\ \vdots \\ d_n^2\eta_n\end{bmatrix}.
\end{equation*}
The model \eqref{lrm2} introduces an additional covariate, $\|p_i\|^2$, compared to the model \eqref{lrm}. To recover the sources from the LS estimator of the model \eqref{lrm2}, a stronger assumption on the sensor deployment is required, as stated in the following assumption.
\begin{assum}
	\label{assum_cohypersphere}
 For any positive integer \(n\), the sensors \(p_1, \dots, p_n\) do not lie on a circle when \(m = 2\), nor on a sphere when \(m = 3\).  
Moreover, there does not exist any subset  $\mathcal{P}'$ of \(\mathcal{P}\) with \(\mu(\mathcal{P}') = 1\) such that \(\mathcal{P}'\) lies entirely on a circle for \(m = 2\), or on a sphere for \(m = 3\).
\end{assum}

\begin{rem}

In model \eqref{lrm1}, when the noise variance is unknown, the parameter \( b \) must be estimated alongside the regression coefficients. Notably, \( b \) is multiplied by the quadratic terms $\|p_i\|^2,~i=1,...,n$  in the model. To accommodate this, the quadratic terms \( \|p_i\|^2 \) must be explicitly included as additional covariate in the regression matrix. 
As a result, beyond Assumption~\ref{assum_cohyperplane}, an additional non-cohypersphericity condition given in Assumption \ref{assum_cohypersphere} is required to ensure the invertibility of the Gram matrix due to the presence of these quadratic terms.
 
\end{rem}

As before, we prove that the Gram matrix $\Phi^T\Phi/n$ is nonsingular, as given in   the following proposition.
\begin{prop}
\label{prop_invertgram_2}
    Under Assumptions \ref{assum_coordinates}-\ref{assum_cohypersphere} , the Gram matrix  $\Phi^T\Phi/n$ is nonsingular for any finite $n$. Further, the limit $\lim_{n \to \infty}\Phi^T\Phi/n$ exists and is nonsingular.
\end{prop}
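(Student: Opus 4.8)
The plan is to reduce both assertions to a single algebraic fact about the rows of $\Phi$ and then dispatch it using the two geometric assumptions. Writing $\phi_i \eq [-2p_i^T,~1,~\|p_i\|^2]^T \in \mathbb{R}^{m+2}$ for the (transpose of the) $i$-th row of $\Phi$, we have $\Phi^T\Phi/n = \frac{1}{n}\sum_{i=1}^n \phi_i\phi_i^T$, so this symmetric positive semidefinite matrix is nonsingular precisely when no nonzero vector annihilates every $\phi_i$. Partitioning such a candidate as $z = [a^T,~c,~e]^T$ with $a \in \mathbb{R}^m$ and $c,e \in \mathbb{R}$, the condition $z^T\phi_i = 0$ reads
\begin{equation*}
    e\|p_i\|^2 - 2a^Tp_i + c = 0, \quad i=1,\dots,n,
\end{equation*}
and the whole argument rests on reading this quadric equation geometrically.

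If $e \neq 0$, completing the square rewrites it as $\|p_i - a/e\|^2 = \|a\|^2/e^2 - c/e$, forcing all sensors onto a common circle ($m=2$) or sphere ($m=3$); this is exactly what the finite-$n$ part of Assumption \ref{assum_cohypersphere} forbids (a degenerate single-point quadric would instead collide with the non-cohyperplanarity of Assumption \ref{assum_cohyperplane}(ii)). If $e = 0$, the equation collapses to $a^Tp_i = c/2$ for all $i$; when $a\neq 0$ this places the sensors on a common line or plane, contradicting the finite-$n$ part of Assumption \ref{assum_cohyperplane}(ii), while $a=0$ forces $c=0$ and hence $z=0$. Thus no nonzero $z$ exists and $\Phi^T\Phi/n$ is nonsingular for every finite $n$.

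For the limit I would first pass from empirical averages to integrals. Each entry of $\phi(p)\phi(p)^T$, with $\phi(p)\eq[-2p^T,~1,~\|p\|^2]^T$, is a polynomial in the coordinates of $p$, hence continuous and---because $\mathcal{P}$ is bounded by Assumption \ref{assum_coordinates}---bounded on the relevant domain. Applying the weak convergence $F_n \to F_\mu$ from Assumption \ref{assum_cohyperplane}(i) entrywise then yields
\begin{equation*}
    \lim_{n\to\infty}\frac{\Phi^T\Phi}{n} = \int \phi(p)\phi(p)^T \, d\mu(p),
\end{equation*}
whose right-hand side I denote $\Phi_\infty$, so the limit exists. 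To see $\Phi_\infty$ is nonsingular, suppose $z^T\Phi_\infty z = \int (z^T\phi(p))^2\,d\mu(p)=0$; then $z^T\phi(p)=0$ for $\mu$-almost every $p$, i.e. a subset $\mathcal{P}'\subseteq\mathcal{P}$ with $\mu(\mathcal{P}')=1$ lies on the quadric $e\|p\|^2-2a^Tp+c=0$. The same dichotomy as above, now invoking the measure-theoretic halves of Assumptions \ref{assum_cohypersphere} (for $e\neq0$) and \ref{assum_cohyperplane}(ii) (for $e=0$, $a\neq0$), excludes every nonzero $z$, so $\Phi_\infty$ is positive definite.

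The routine algebra---completing the square and identifying the quadric---is immediate; the step requiring the most care is the passage to the limit. Weak convergence of distribution functions delivers $\frac{1}{n}\sum g(p_i)\to\int g\,d\mu$ only for bounded continuous $g$, whereas the polynomial entries of $\phi\phi^T$ are a priori unbounded on $\mathbb{R}^m$, so the argument must exploit that all $p_i$ and the support of $\mu$ lie inside the bounded set $\mathcal{P}$ (for instance by truncating each integrand outside a compact neighborhood of $\mathcal{P}$ without altering its values there). I would also confirm that the measure-theoretic versions of the two geometric assumptions are genuinely strong enough to forbid a full-$\mu$-measure set on a quadric of the stated form---this is the one place where Assumption \ref{assum_cohypersphere} contributes beyond what Assumption \ref{assum_cohyperplane} already provides. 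The overall structure closely parallels the proof of Proposition \ref{prop_invertgram_1}, the only genuinely new ingredients being the extra $\|p_i\|^2$ covariate and the corresponding non-cohypersphericity condition.
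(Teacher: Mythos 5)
Your proposal is correct and follows essentially the same route as the paper's proof: the same quadratic-form reduction, the same dichotomy on the $\|p_i\|^2$-coefficient (completing the square to contradict Assumption \ref{assum_cohypersphere} when it is nonzero, then falling back to the hyperplane argument of Proposition \ref{prop_invertgram_1} when it vanishes), and the same passage to the limit via weak convergence of the empirical distribution on the compact set $\mathcal{P}$ followed by the measure-theoretic halves of the two assumptions. Your added care about the degenerate (empty or single-point) quadric and about boundedness of the polynomial integrands on $\mathcal{P}$ only makes explicit what the paper handles implicitly via its Lemma B2.
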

 
 The  LS estimator for the model \eqref{lrm2} is given by
\begin{align}
\widehat{\beta}_n = (\Phi^\top \Phi)^{-1} \Phi^\top \Gamma,\label{hatbeta_unknownvar}
\end{align}
and the source is estimated by dividing the first $m$ entries of $\widehat{\beta}_n$ by the $(m+2)$-th entry, i.e.,
\begin{equation}
\label{hatp_unknownvar}
    \widehat{p}_n = [\widehat{\beta}_n]_{1:m}\big/[\widehat{\beta}_n]_{m+2}.
\end{equation}
Analogous to the case with known noise variance, we discard the constraint on the parameters $\beta$ and directly use the LS estimator $\widehat{\beta}_n$ to obain the $\sqrt{n}$-consistent estimator \eqref{hatp_unknownvar}, which is proved in the following theorem.
\begin{thm}
\label{thm_sqrtn_esti_2}
    Under Assumptions \ref{assum_noise}-\ref{assum_cohypersphere} , it holds that 
    \begin{equation}
        \widehat{p}_n-p^0 = O_p(1/\sqrt{n}).
    \end{equation}
\end{thm}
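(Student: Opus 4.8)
The plan is to mirror the argument for Theorem~\ref{thm_sqrtn_esti_1}, first establishing the $\sqrt{n}$-consistency of the full least-squares estimator $\widehat{\beta}_n$ and then transferring this rate to $\widehat{p}_n$ through the nonlinear recovery map. Writing $\widehat{\beta}_n - \beta^0 = (\Phi^T\Phi)^{-1}\Phi^T V = (\Phi^T\Phi/n)^{-1}(\Phi^T V/n)$, I would invoke Proposition~\ref{prop_invertgram_2} to conclude that $(\Phi^T\Phi/n)^{-1}$ converges to a nonsingular limit and is therefore $O(1)$.

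For the noise-weighted term $\Phi^T V/n$, each coordinate is an average $\frac{1}{n}\sum_{i=1}^n [\Phi]_{ij}\, d_i^2 \eta_i$, in which the regressors $[\Phi]_{ij}$ and the squared distances $d_i^2$ are uniformly bounded by Assumption~\ref{assum_coordinates}, while $\{\eta_i\}$ are i.i.d. with zero mean and finite variance $b^2(b^2-1)$ by Lemma~\ref{lem_eta}. Hence each coordinate has zero mean and variance $O(1/n)$, so a Chebyshev argument gives $\Phi^T V/n = O_p(1/\sqrt{n})$ and therefore $\widehat{\beta}_n - \beta^0 = O_p(1/\sqrt{n})$.

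The second and more delicate step is to propagate this rate through the ratio $\widehat{p}_n = [\widehat{\beta}_n]_{1:m}\big/[\widehat{\beta}_n]_{m+2}$. The crucial point is that $[\beta^0]_{m+2} = b$, which is strictly positive by Lemma~\ref{lem_eta}. Since $[\widehat{\beta}_n]_{m+2}\to b$ in probability, the denominator is bounded away from zero with probability tending to one, so the recovery map $g(\beta) = [\beta]_{1:m}/[\beta]_{m+2}$ is continuously differentiable in a neighborhood of $\beta^0$. A first-order Taylor expansion (delta method) then yields $\widehat{p}_n - p^0 = \nabla g(\beta^0)(\widehat{\beta}_n - \beta^0) + o_p(\|\widehat{\beta}_n - \beta^0\|)$, and since $\nabla g(\beta^0)$ is a fixed bounded matrix the $O_p(1/\sqrt{n})$ rate carries over.

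I expect the main obstacle to be the rigorous handling of the denominator in this ratio: unlike the known-variance case of Theorem~\ref{thm_sqrtn_esti_1}, where the source is recovered by a purely linear extraction of coordinates, here the division by $[\widehat{\beta}_n]_{m+2}$ introduces a genuine nonlinearity. Making the delta-method step precise requires controlling the event that $[\widehat{\beta}_n]_{m+2}$ stays away from zero and bounding the remainder term uniformly on that event; the positivity $b>0$ furnished by Lemma~\ref{lem_eta} is exactly what makes this control possible.
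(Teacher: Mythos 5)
Your proposal is correct and follows essentially the same route as the paper: both first establish $\widehat{\beta}_n-\beta^0=O_p(1/\sqrt{n})$ via the decomposition $(\Phi^T\Phi/n)^{-1}(\Phi^TV/n)$, invoking Proposition \ref{prop_invertgram_2} for the Gram matrix and the boundedness of the regressors together with the zero-mean, finite-variance $\eta_i$ for the noise term, and then transfer the rate through the ratio $[\widehat{\beta}_n]_{1:m}/[\widehat{\beta}_n]_{m+2}$ using the fact that the limiting denominator $b$ is bounded away from zero (the paper notes $b>1$ from Lemma \ref{lem_eta}). The only cosmetic difference is that the paper dispatches the final step by direct $O_p$ arithmetic, writing $\widehat{p}_n=(bp^0+O_p(1/\sqrt{n}))/(b+O_p(1/\sqrt{n}))=p^0+O_p(1/\sqrt{n})$, rather than through a formal delta-method expansion; the two are equivalent.
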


\section{Algorithm for RSS-based localization}
\label{sec:algo}
 In this section, we summarize the algorithm for source localization using the proposed two-step estimator based on RSS measurements in Algorithm \ref{algo_rss}.


\begin{algorithm}
\caption{The algorithm for obtaining consistent and asymptotically efficient estimator using RSS measurements}  
\label{algo_rss}
\begin{algorithmic}[1]
\Require Sensor locations $\{p_i\}_{i=1}^n$, RSS measurements $\{10\logten(P_i)\}_{i=1}^n$, constants $P_0$, $\alpha$,   and noise variance $\sigma^2$ (if available).
\State Calculate the equivalent measurements $y_i = (\logten(P_i)-\logten(P_0))/(-\alpha)$ for $i=1,...,n$;
\If {$\sigma^2$ is available}
\State Calculate $b$ according to Lemma \ref{lem_eta};
\State Calculate the LS estimate $\widehat{\theta}_n$ according to \eqref{hattheta_knownvar};
\State Extract the estimate $\widehat{p}_n$ according to \eqref{hatp_knownvar}
\Else
\State Calculate the LS estimate $\widehat{\beta}_n$ according to \eqref{hatbeta_unknownvar};
\State Extract the estimate $\widehat{p}_n$ according to \eqref{hatp_unknownvar} with denominator replaced by $\max\{1,[\widehat{\beta}_n]_{m+2}\}$;
\EndIf

\State Run a one-step GN iteration \eqref{gn} for $\widehat{p}_n$ to obtain $\widehat{p}_n^{\rm GN}$;
\Ensure The source location estimate $\widehat{p}_n^{\rm GN}$.
\end{algorithmic}
\end{algorithm}

In Algorithm \ref{algo_rss}, to ensure numerical stability when handling small sample sizes, we replace the denominator \( [\widehat{\beta}_n]_{m+2} \) in \eqref{hatp_unknownvar} with \( \max\{1, [\widehat{\beta}_n]_{m+2}\} \), as suggested by Lemma \ref{lem_eta} where \( b > 1 \). This adjustment does not impact the asymptotic properties since \( [\widehat{\beta}_n]_{m+2} = b + O_p(1/\sqrt{n}) \) is larger than 1 with probability converging to 1.

For Algorithm \ref{algo_rss}, where \( X \in \mathbb{R}^{n \times (m+1)} \), \( \Phi \in \mathbb{R}^{n \times (m+2)} \) are the design matrices and \( Y \in \mathbb{R}^{n \times 1} \), \( \Gamma \in \mathbb{R}^{n \times 1} \) are the response vectors, each of the key steps (Lines 4, 7, and 10) has a computational complexity of \( \mathcal{O}(n) \). Since these steps dominate the algorithm's runtime and scale linearly with the number of measurements \( n \), the overall computational complexity of Algorithm \ref{algo_rss} is \( \mathcal{O}(n) \). This implies that the algorithm's execution time grows linearly with the data size.


\section{Simulations}
\label{sec:sim}

In this section, we perform Monte Carlo simulations to validate our theoretical results. Simulations are carried out separately for both 2-D and 3-D scenarios. In each scenario, we conduct \( N \) Monte Carlo trials with different realizations of measurement noise. For each estimator, the estimate obtained in the \( j \)-th trial is denoted by \( \widehat{p}_{n,j} \).

We evaluate the bias and root mean squared error (RMSE) of each estimator using the following definitions \cite{Wang2023, Zeng2024}:
\begin{align*}
    &\mathrm{Bias}(\widehat{p}_n) = \sum_{i=1}^m \big|\big[\Delta(\widehat{p}_n)\big]_i\big|, \quad \Delta(\widehat{p}_n) =   \frac{1}{N} \sum_{j=1}^N \widehat{p}_{n,j} - p^0  , \\
    &\mathrm{RMSE}(\widehat{p}_n) = \sqrt{\frac{1}{N} \sum_{j=1}^N \| \widehat{p}_{n,j} - p^0 \|^2},
\end{align*}
where \( p^0 \) denotes the true source location, and \( m \) is the dimension of the source coordinates, with \( m = 2 \) for the 2-D case and \( m = 3 \) for the 3-D case.

We use the root Cramér-Rao lower bound (RCRLB) as a performance benchmark, evaluating whether the RMSE of each estimator approaches the RCRLB.

We compare our proposed estimators for RSS-based localization, labeled as LS for the first step and LS+GN for the second step, with the following methods:

\begin{enumerate}[(i)] \item Eigen: An estimator that reformulates the problem of finding the global minimum of an approximated ML objective function as an eigenvalue problem \cite{Larsson2025}. \item Improved-LLS: An estimator that first forms a linear least squares solution, which is then refined using a linear system derived from the first-order Taylor expansion of the norm equation \cite{So2011}. \item Iteration: An estimator that iteratively solves a system of range equations, where the real ranges between sensors and the source are approximated using a Bayesian method \cite{Coluccia2014}. \item WLS-avg: An estimator that constructs a linear system by differencing the range equations with their average. The LS solution of this system is then used to approximate the covariance for a WLS solution as a refinement \cite{Salman2014}. \item LSRE-SDP: An estimator that optimizes the LSRE objective function with range constraints using SDP \cite{Wang2019}. \item ML-SDP: An estimator that optimizes an approximated ML objective function with a norm constraint using SDP, where the solver SeDuMi is recommended \cite{Vaghefi2013}. \end{enumerate}

Additionally, we evaluate the unknown-noise-variance counterparts of the proposed LS and LS+GN estimators, denoted as LS(unknown $\sigma$) and LS(unknown $\sigma$)+GN, respectively. 

All algorithms are implemented in MATLAB and executed on an AMD EPYC 7543 32-Core Processor.

\subsection{2-D scenario: fixed sensors}
\label{simu:rss_2d_fix}
We set 10 fixed sensors with the coordinates: 
\begingroup\allowdisplaybreaks
\begin{align*}
    & p_1 = [0,20]^T,~p_2=[0,50]^T,~p_3=[50,50]^T,~p_4=[50,0]^T,\\
    &p_5=[50,-50]^T,~ p_6 = [0,-50]^T,~p_7=[0,-20]^T,\\&p_8=[-50,-50]^T,~p_9=[-50,0]^T,~p_{10}=[-50,50]^T.
\end{align*}
\endgroup
The coordinates of source location are $p^0=[70,30]^T$. 

 Hence, Assumption \ref{assum_coordinates} holds. Each sensor measures \( T \) rounds of i.i.d. observations in total. As \( T \) increases, the number of RSS measurements becomes large enough to reveal the asymptotic behavior of the estimators. Note that this setup is equivalent to placing \( T \) sensors at each of the 10 specified placement sites, resulting in a total of \( 10T \) sensors making \( n = 10T \) i.i.d. observations. It is straightforward to observe that as \( T \) grows, the sample distribution of these \( 10T \) sensors converges to a distribution \( \mu \) such that \( \mu(p) = 1/10 \) for \( p \) at any of the 10 placement sites, and \( \mu(p) = 0 \) otherwise. Therefore, Assumptions \ref{assum_cohyperplane} and \ref{assum_cohypersphere} hold.

 We first present the biases and RMSEs for varying numbers of measurements (i.e., varying \( T \)) for each estimator. The standard deviation of the measurement noise is set to \( \sigma = 2 \, \text{dB} \). We evaluate \( T \) values of \( 3, 10, 30, 100, 200, \) and \( 400 \), conducting 1000 Monte Carlo runs for each \( T \). The biases for varying \( T \) are shown in Fig. \ref{fig:rss_2d_bias}, where it can be observed that the biases of the LS and LS+GN estimators are relatively small and converge to zero, confirming that both estimators are asymptotically unbiased. The WLS-avg estimator exhibits similar behavior to our proposed estimators mainly because its unconstrainted LS in the first step is unbiased, while the biases of the LSRE-SDP estimator also decrease but remain relatively larger. Other estimators exhibit significant biases that do not diminish with increasing sample sizes, indicating that they are biased. 
 
\begin{figure}[t]
    \centering
    \includegraphics[width=0.5\textwidth]{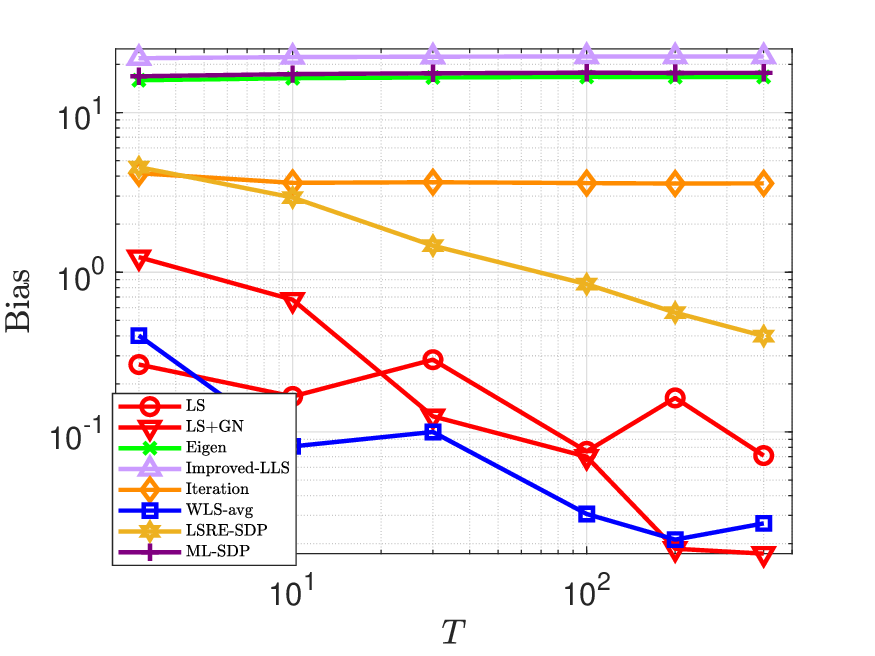}
    \caption{Biases of all the estimators with varying numbers of measurements for   2-D RSS-based localization. }
    \label{fig:rss_2d_bias}
\end{figure}

The RMSEs of these estimators for varying \( T \) are shown in Fig. \ref{fig:rss_2d_RMSE}. When the sample size is small, the LS+GN, Iteration, WLS-avg, and LSRE-SDP estimators exhibit similar RMSEs. As the sample size increases, the RMSE of the LS+GN estimator quickly converges to the RCRLB, confirming its asymptotic efficiency. The log-log plot in Fig. \ref{fig:rss_2d_RMSE} clearly demonstrates the  convergence rate \( O_p(1/\sqrt{n}) \) of both the RCRLB and the LS+GN estimator. While the RMSEs of the LS, Iteration, WLS-avg, and LSRE-SDP estimators also decrease, they remain higher than those of the LS+GN estimator. The remaining three estimators exhibit larger and non-decreasing RMSEs, primarily due to their biases.  

\begin{figure*}[!t]
	\centering
	\begin{subfigure}[b]{.48\textwidth}
    \centering
    \includegraphics[width=\textwidth]{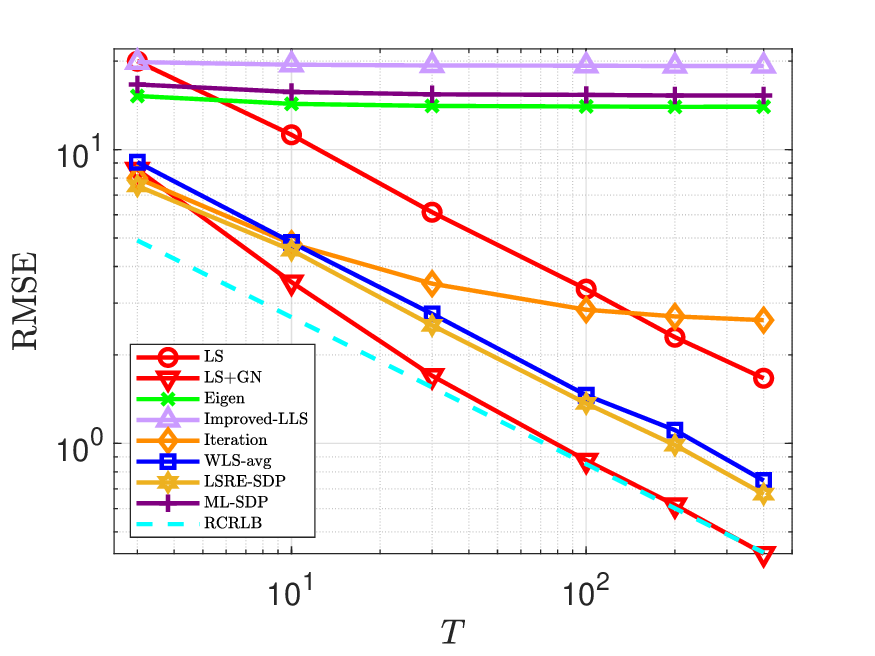}
    \caption{Varying numbers of measurements. }
    \label{fig:rss_2d_RMSE}
	\end{subfigure}
	\begin{subfigure}[b]{.48\textwidth}
    \centering
    \includegraphics[width=\textwidth]{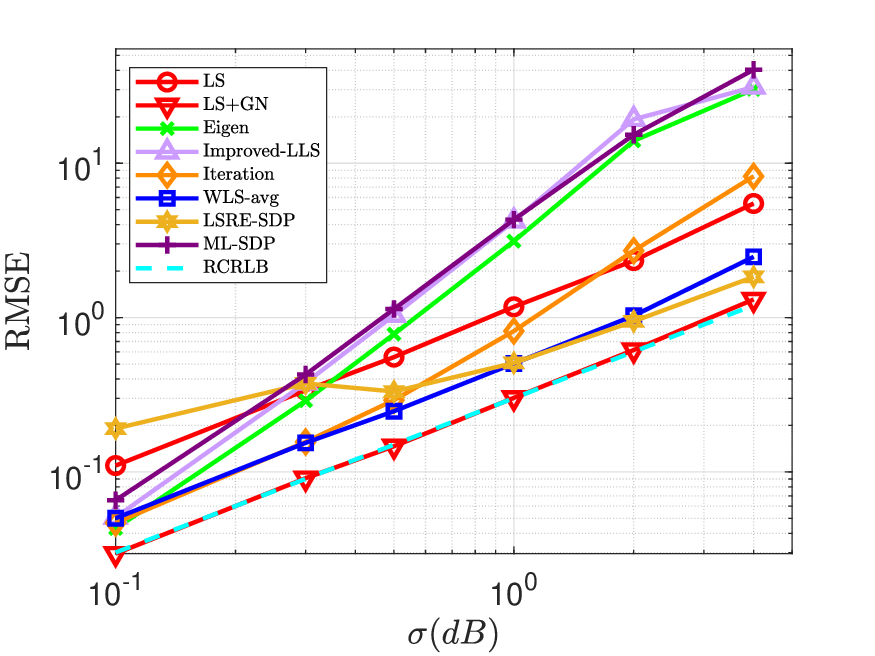}
    \caption{Varying noise intensities.}
    \label{fig:rss_2d_RMSE_varyingsigma}
	\end{subfigure}
	\caption{RMSEs for RSS-based localization  of 2-D fixed sensors.}
\end{figure*}

 Next, we investigate the RMSEs of each estimator for varying noise intensities in the case of a large sample size. We set \( \sigma \) to \( 0.1, 0.3, 0.5, 1, 2, \) and \( 4 \, \text{dB} \), respectively. For each \( \sigma \), we set \( T = 200 \) and conduct 1000 Monte Carlo runs. The RMSEs of the estimators for varying noise intensities are presented in Fig. \ref{fig:rss_2d_RMSE_varyingsigma}, where it can be observed that all estimators exhibit small RMSEs for low noise intensities. As the noise intensities increase, only the RMSEs of the LS+GN estimator approach the RCRLB, except for the case of extremely high noise intensity.

We further compare the computational time of these estimators as the number of measurements increases. Each algorithm is run 1000 times to obtain the average time cost, with the results summarized in Table \ref{table:rss_time}. For large sample sizes, our proposed algorithm achieves the fastest execution time. For smaller sample sizes, Improved-LLS has the lowest computational cost, though the difference between LS+GN and Improved-LLS is negligible. Both LS+GN and Improved-LLS involve solving a few small linear systems (e.g., \(2 \times 2\) or \(3 \times 3\) matrix inversions), resulting in \(\mathcal{O}(n)\) complexity. The Iteration method also scales as \(\mathcal{O}(n)\) but incurs higher overhead due to its iterative procedure. In contrast, WLS-avg requires inverting an \(n \times n\) approximated covariance matrix, leading to \(\mathcal{O}(n^3)\) complexity. The Eigen method involves \(n^2\)-element summations, matrix diagonalization, and eigenvalue decomposition, making it significantly slower for large \(n\). Finally, both LSRE-SDP and ML-SDP rely on the CVX toolbox to solve SDP problems, which is computationally expensive.
 
\begin{table}[ht]
\centering
\caption{The average time spent by different RSS-based localization algorithms among 1000 experiments. (Unit: seconds)}
\resizebox{1\columnwidth}{!}{
\begin{tabular}{ccccccc}
\hline
\hline
& $n=30$  & $n=100$  & $n=300$  & $n=1000$ & $n=2000$ & $n=4000$ \\ \hline
\textbf{LS+GN} & 0.00023& 0.00027& 0.00099& \textbf{0.00241}& \textbf{0.00443}& \textbf{0.00952} \\ \hline
Eigen & 0.00097& 0.00878& 0.07715& 0.85296& 3.44422& 14.38152 \\ \hline
Improved-LLS & \textbf{0.00012}& \textbf{0.00015}& \textbf{0.00094}& 0.00510& 0.02382& 0.02381 \\ \hline
Iteration & 0.00155& 0.00348 & 0.01772& 0.04462& 0.07316& 0.16394 \\ \hline
WLS-avg & 0.00018 & 0.00036 & 0.00331& 0.02263 & 0.08767 & 0.60154 \\ \hline
LSRE-SDP & 0.37977& 0.67624 & 1.56392& 4.90455& 10.37219& 23.88791 \\ \hline
ML-SDP & 0.26775& 0.31723 & 0.48078 & 1.01222 & 1.88521 & 3.86127 \\ \hline
\end{tabular}}
\label{table:rss_time}
\end{table}

Finally, we compare the LS and LS+GN estimators with their unknown-noise-variance counterparts, the LS(unknown $\sigma$) and LS(unknown $\sigma$)+GN estimators. The RMSEs for varying numbers of measurements are shown in Fig. \ref{fig:rss_compare_varyingT}, while those for varying noise intensities are displayed in Fig. \ref{fig:rss_compare_varyingsigma}. Although the unknown-noise-variance version of the LS estimator exhibits slightly higher RMSEs than the standard LS estimator, both the LS+GN estimator and its unknown-noise-variance variant asymptotically achieve the RCRLB with negligible differences—except under extremely high noise intensity. These results demonstrate the reliability of the proposed estimators even when the noise variance is unknown.
\begin{figure*}[!t]
	\begin{subfigure}[b]{.48\textwidth}
    \centering
    \includegraphics[width=\textwidth]{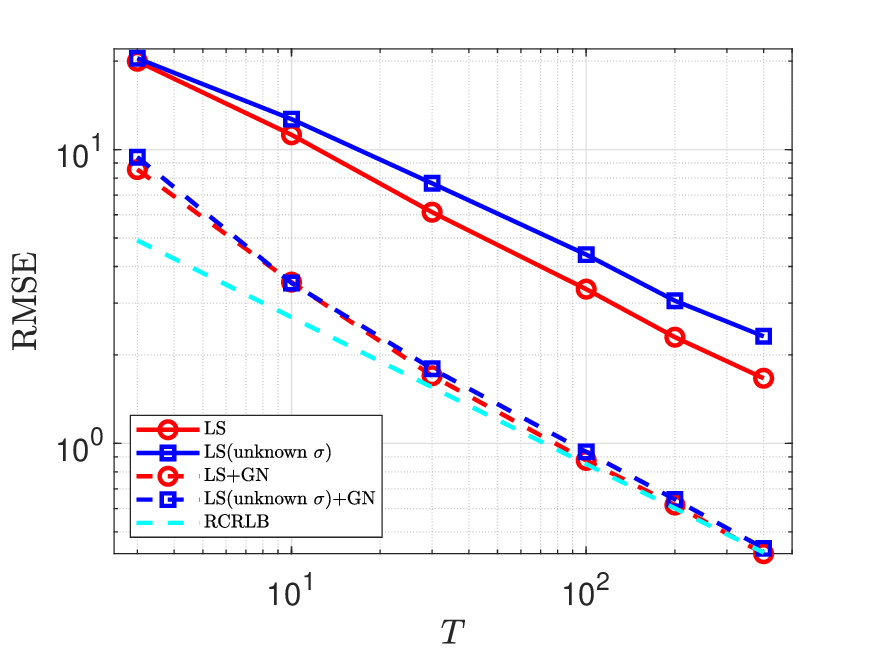}
    \caption{Varying numbers of measurements.}
    \label{fig:rss_compare_varyingT}
\end{subfigure}
\begin{subfigure}[b]{.48\textwidth}
   \centering
    \includegraphics[width=\textwidth]{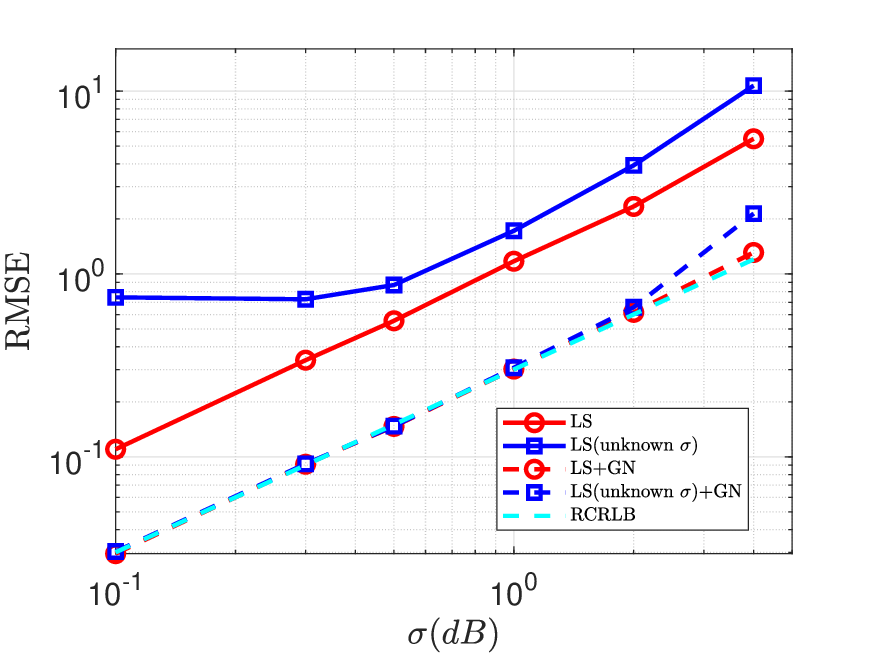}
    \caption{Varying noise intensities.}
    \label{fig:rss_compare_varyingsigma}
\end{subfigure}
\caption{RMSE comparison of the LS and LS+GN estimators with their unknown-noise-variance counterparts for RSS-based localization of 2-D fixed sensors.}
\label{fig:rss_compare}
\end{figure*}

\subsection{2-D scenario: random sensors}
 In this subsubsection, we consider random sensor deployment. The source's coordinates are \( p^0 = [120, 20]^T \). The number of sensors is set to \( n = 100, 300, 1000, 2000, 3000, \) and \( 4000 \), respectively. For each \( n \), we randomly generate \( n \) sensors \( \{ p_i = [x_i, y_i]^T, \, i = 1, \dots, n \} \), where \( \{ x_i, y_i, \, i = 1, \dots, n \} \) are independently sampled from a uniform distribution on \( [0, 100] \). This means the sensors are uniformly distributed within a square region with side length 100. It is straightforward to verify that Assumptions \ref{assum_coordinates} through \ref{assum_cohypersphere} hold. Note that the probability of having \( n \geq 100 \) sensors being collinear or concyclic is negligible. The noise is generated under Assumption \ref{assum_noise} with \( \sigma = 2 \, \text{dB} \), and each sensor makes one observation.

 We examine the biases and RMSEs for varying sample sizes of each estimator. For each \( n \), 1000 Monte Carlo experiments are conducted. The biases and RMSEs of each estimator are presented in Fig. \ref{fig:rss_bias_random} and Fig. \ref{fig:rss_rmse_random}, respectively. Similar to the 2-D fixed sensor scenario, the asymptotic unbiasedness and consistency of the LS and LS+GN estimators, as well as the asymptotic efficiency of the LS+GN estimator, are confirmed. Specifically, the LS, LS+GN, and WLS-avg estimators exhibit small biases that converge to zero. The Improved-LLS, Eigen, and Iteration estimators display significant and constant biases. However, the biases of the LSRE-SDP estimator increase as the sensors become more densely packed in the square region. As a result, the RMSEs of the LSRE-SDP estimator are close to the RCRLB for small sample sizes but become extremely high as the sensor density increases. In contrast, the RMSEs of the LS, LS+GN, and WLS-avg estimators converge with a rate of \( O_p(1/\sqrt{n}) \), with the RMSEs of the LS+GN estimator achieving the RCRLB.

\begin{figure*}[!t]
	\begin{subfigure}[b]{.48\textwidth}
    \centering
    \includegraphics[width=\textwidth]{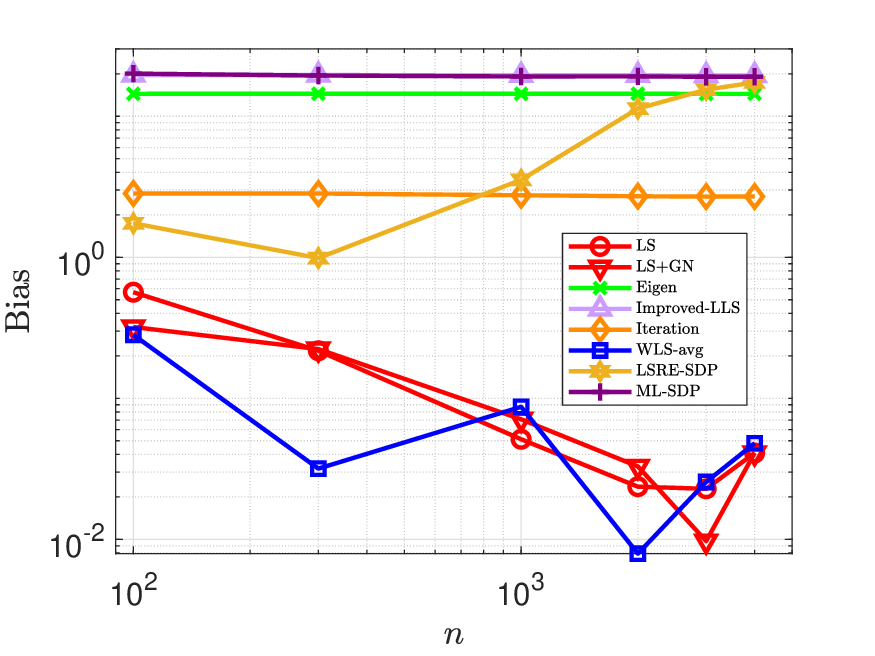}
    \caption{Biases for varying numbers of measurements.}
    \label{fig:rss_bias_random}
\end{subfigure}
\begin{subfigure}[b]{.48\textwidth}
   \centering
    \includegraphics[width=\textwidth]{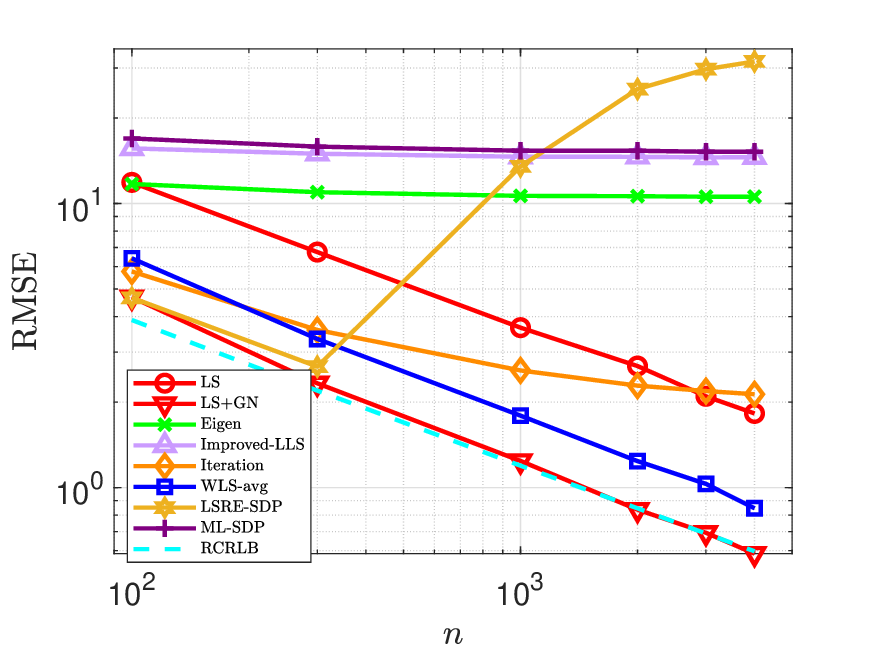}
    \caption{RMSEs for varying numbers of measurements.}
    \label{fig:rss_rmse_random}
\end{subfigure}
\caption{Biases and RMSEs for RSS-based localization   of 2-D random deployed sensors.}
\label{fig:2d_random}
\end{figure*}

\subsection{3-D scenario: fixed sensors}

 The settings for the 3-D scenario with fixed sensors are similar to those of the 2-D case, with the coordinates of the 10 fixed sensors given by:
\begingroup
\allowdisplaybreaks
\begin{align*}
    & p_1 = [0,20,50]^T,~p_2=[0,50,0]^T,~p_3=[50,50,-50]^T,\\&p_4=[50,0,0]^T,~p_5=[50,-50,50]^T,
    ~ p_6 = [0,-50,0]^T,\\&p_7=[0,-20,-50]^T,~p_8=[-50,-50,0]^T,\\&p_9=[-50,0,50]^T,~p_{10}=[-50,50,-50]^T,
\end{align*}
\endgroup
The coordinates of the source location are \( p^0 = [70, 30, 10]^T \). 
 
 In the Monte Carlo experiments, the settings for \( T \) and \( \sigma \) are consistent with those used in the 2-D case. The biases of each estimator for varying numbers of measurements are presented in Fig. \ref{fig:rss_3d_bias}, while the RMSEs for varying numbers of measurements and varying noise intensities are shown in Fig. \ref{fig:rss_3d_RMSE} and Fig. \ref{fig:rss_3d_RMSE_varyingsigma}, respectively. To compare the LS and LS+GN estimators with their unknown-noise-variance counterparts, Fig. \ref{fig:rss_3d_compare} illustrates the RMSEs of these estimators. Similar results are observed in these figures, and comparable conclusions can be drawn to those discussed in Subsection \ref{simu:rss_2d_fix}. 

\begin{figure}[t]
    \centering
    \includegraphics[width=0.5\textwidth]{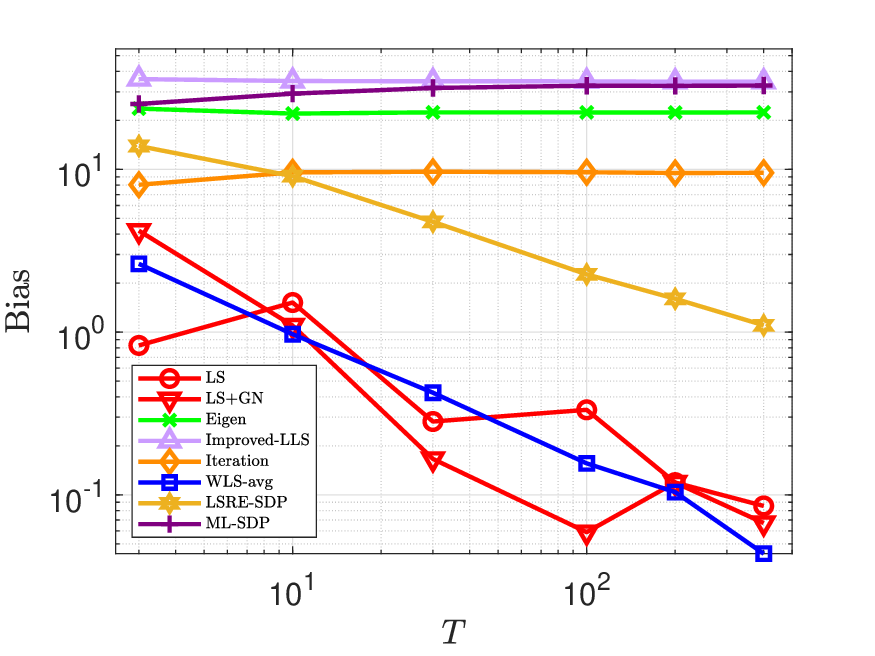}
    \caption{Biases of the estimators with varying numbers of measurements for RSS-based localization   of 3-D fixed sensors. }
    \label{fig:rss_3d_bias}
\end{figure}

\begin{figure*}[!t]
\label{fig6}
	\centering
	\begin{subfigure}[b]{.48\textwidth}
    \centering
    \includegraphics[width=\textwidth]{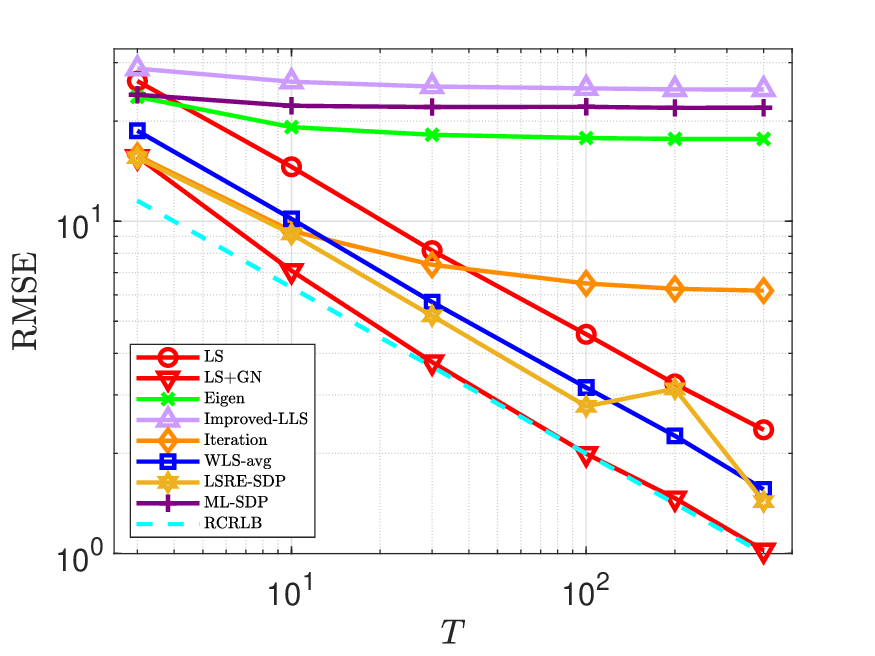}
    \caption{Varying numbers of measurements. }
    \label{fig:rss_3d_RMSE}
	\end{subfigure}
	\begin{subfigure}[b]{.48\textwidth}
    \centering
    \includegraphics[width=\textwidth]{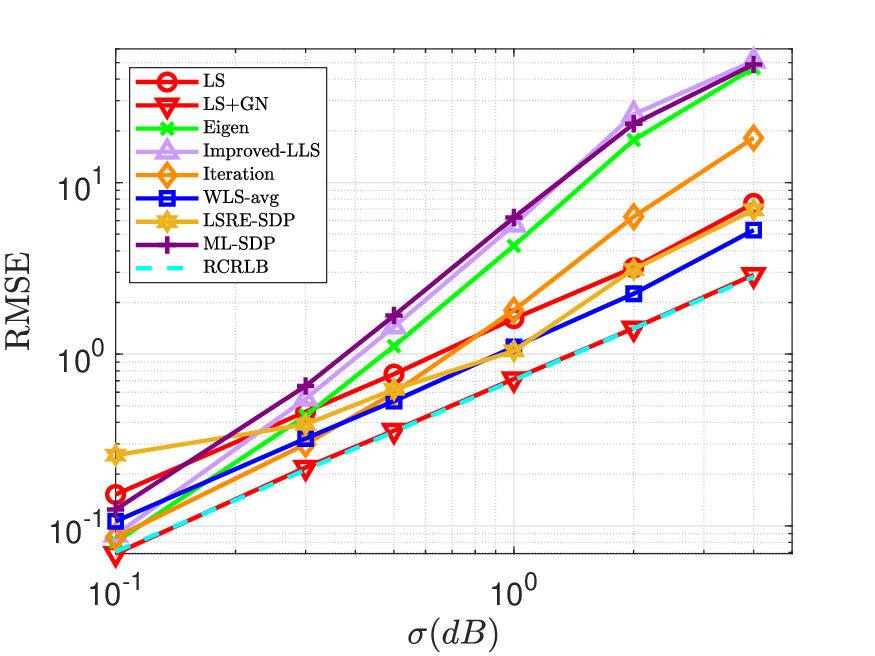}
    \caption{Varying noise intensities.}
    \label{fig:rss_3d_RMSE_varyingsigma}
	\end{subfigure}
	\caption{RMSEs for RSS-based localization  of 3-D fixed sensors.}
\end{figure*}

\begin{figure*}[!t]
	\begin{subfigure}[b]{.48\textwidth}
    \centering
    \includegraphics[width=\textwidth]{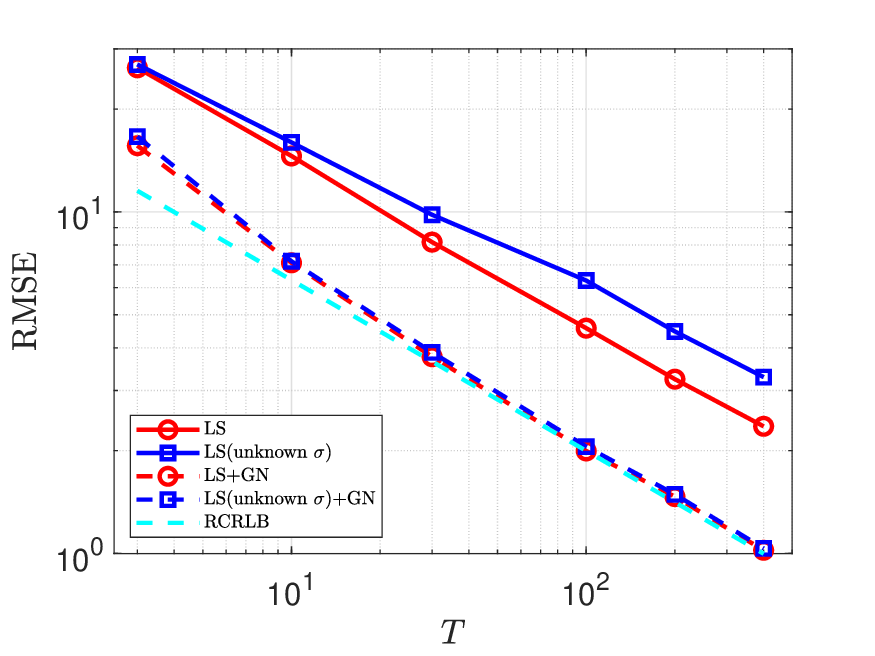}
    \caption{Varying numbers of measurements.}
    \label{fig:rss_3d_compare_varyingT}
\end{subfigure}
\begin{subfigure}[b]{.48\textwidth}
   \centering
    \includegraphics[width=\textwidth]{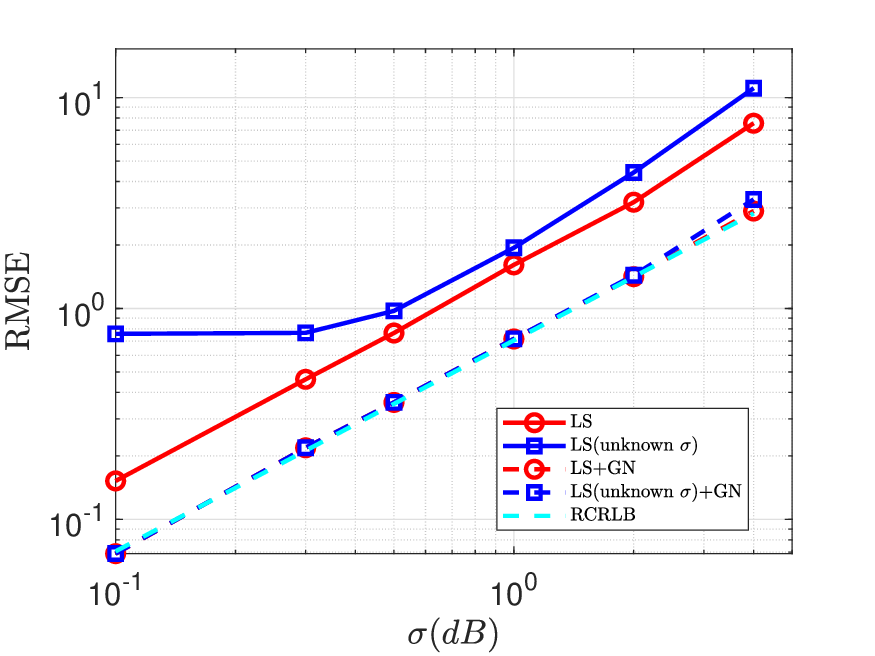}
    \caption{Varying noise intensities.}
    \label{fig:rss_3_dcompare_varyingsigma}
\end{subfigure}
\caption{RMSE comparison of the LS and LS+GN estimators with their unknown-noise-variance counterparts for RSS-based localization of 3-D fixed sensors.}
\label{fig:rss_3d_compare}
\end{figure*}

\section{Conclusion}
\label{sec:con}
 In this paper, we have established the asymptotic localizability of the RSS-based localization problem with respect to sensor deployment. We have also proposed a   consistent and asymptotically efficient two-step estimator for source localization using RSS measurements, under specific conditions related to measurement noise and sensor geometry.
In the first step, we derive \( \sqrt{n} \)-consistent estimators for the true source coordinates using the LS estimators, considering both known and unknown noise variance cases.
In the second step, we refine  the \( \sqrt{n} \)-consistent estimator from the first step by the GN iterative algorithm. Theoretically, the two-step estimator achieves asymptotic efficiency with just one-step GN iteration under appropriate conditions. Additionally, considering the computational complexity of the LS estimators and the one-step GN iteration, the total computational cost of the two-step estimator is 
$\mathcal{O}(n)$, making it computationally cheap for practical applications.
Monte Carlo simulations demonstrate that our two-step estimator achieves both statistical efficiency and low computational complexity, outperforming existing methods for RSS-based localization.

 \section*{Appendix A: Proofs of results}

This section contains the proofs of the results in the main body of the paper.
\renewcommand{\thesection}{A}
\setcounter{equation}{0}
\renewcommand{\theequation}{A\arabic{equation}}
\setcounter{lem}{0}
\setcounter{subsection}{0}

\subsection{Proof of Theorem \ref{lem_uniq}}
\label{pf_uniq}
Under Assumptions \ref{assum_coordinates} and \ref{assum_cohyperplane} (i), we have $\mathcal{P} \times \mathcal{P}^0$ is compact and $(\logten(\|\tilde{p}-p\|) - \logten(\|\tilde{p}-p^0\|))^2$ is a continuous and bounded function of $(\tilde{p},p) \in \mathcal{P} \times \mathcal{P}^0$. Then by Lemma \ref{lem_converge_samplemean}, we have
$
    \lim_{n\to \infty} h_n(p) = \mathbb{E}_\mu\left[(\logten(\|\tilde{p}-p\|) - \logten(\|\tilde{p}-p^0\|))^2\right],
$
where $\mathbb{E}_\mu$ is taken over $\tilde{p}\in\mathbb{R}^m$ with respect to the distribution $\mu$. Suppose that there exists some $p \neq p^0$ such that
$
    \mathbb{E}_\mu\left[(\logten(\|\tilde{p}-p\|) - \logten(\|\tilde{p}-p^0\|))^2\right] = 0.
$
For every such $p$, define $\mathcal{P}_p \eq \{ \tilde{p} \in \mathcal{P} \vert  \logten(\|\tilde{p}-p\|) = \logten(\|\tilde{p}-p^0\|)\}$. Then $\mu(\mathcal{P}_p) =1$. However, $\logten(\|\tilde{p}-p\|) = \logten(\|\tilde{p}-p^0\|)$ is equivalent to $\|\tilde{p}-p\| = \|\tilde{p}-p^0\|$, which means
$
    (p^0-p)^T\left(\tilde{p} - \frac{1}{2}(p^0+p)\right) = 0.
$
Since \(p \neq p^0\), it follows that every \(\tilde{p} \in \mathcal{P}_p\) lies on the hyperplane which is perpendicular to the vector \(p^0 - p\) and passes through the midpoint \((p^0 + p)/2\). This contradicts Assumption \ref{assum_cohyperplane}.

This completes the proof.

\subsection{Proof of Lemma \ref{lem_converge_M}}
\label{pf_converge_M}
    It is straightforward that
    \begin{equation*}
        \frac{1}{n}\sum_{i=1}^n (\nabla f_i(p))\nabla f_i(p)^T = \frac{1}{n}\sum_{i=1}^n\frac{(p-p_i)(p-p_i)^T}{\|p_i-p\|^4\ln^2(10)}.
    \end{equation*}
Under Assumptions \ref{assum_coordinates}-\ref{assum_cohyperplane}, we have 
$
    \frac{1}{\|\tilde{p}-p\|^4 \ln^2(10)} (p-\tilde{p})(p-\tilde{p})^T
$
is a continuous and bounded matrix function of $(\tilde{p},p) \in \mathcal{P}\times \mathcal{P}^0$. Then by Lemma \ref{lem_converge_samplemean}, we  obtain that $\frac{1}{n}\sum_{i=1}^n\big(\nabla f_i(p)\big) \nabla f_i(p)^T$ converges uniformly on $\mathcal{P}^0$  as $n \to \infty$. This proves point (i).
 
 Moreover, by the definition of $M^0$, we have
    \begin{align*}
        M^0  
        =\mathbb{E}_\mu \left[\frac{1}{\|\tilde{p}-p^0\|^4\ln^2(10)}(p^0-\tilde{p})(p^0-\tilde{p})^T\right],
    \end{align*}
where $\mathbb{E}_\mu$ is taken over $\tilde{p}$ with respect to $\mu$.
Let $\psi$ be a vector such that
\begin{align*}
    \psi^T M^0\psi =& \mathbb{E}_\mu \left[ \frac{1}{\|\tilde{p}-p^0\|^4\ln^2(10)}\psi^T(p^0-\tilde{p})(p^0-\tilde{p})^T\psi\right]\\
        =& \mathbb{E}_\mu \left[\frac{1}{\|\tilde{p}-p^0\|^4\ln^2(10)}\left((p^0-\tilde{p})^T\psi\right)^2\right]= 0.
\end{align*}
For every such $\psi$, define $\mathcal{P}_\psi \eq \{\tilde{p} \in \mathcal{P} \vert (p^0-\tilde{p})^T\psi = 0\}$. Then $\mu(\mathcal{P}_\psi) =1$ and derives that $\psi=0$. Otherwise,  $\psi \neq 0$ implies that   every $\tilde{p} \in \mathcal{P}_\psi$  lies on a hyperplane perpendicular to $\psi$, which contradicts Assumption \ref{assum_cohyperplane}.
This proves point (ii).

This completes the proof.

\subsection{Proof of Proposition \ref{thm_converge_ML_objfunc}}
\label{pf_converge_ML_objfunc}
By the definition of $\ell_n(p)$, we have
\begingroup
\allowdisplaybreaks
\begin{align*}
    &\ell_n(p)/n\\
    &= -\ln\left(\frac{\sqrt{2\pi}\sigma}{10\alpha}\right)-\frac{50\alpha^2}{\sigma^2}\frac{1}{n}\sum_{i=1}^n(f_i(p^0)-f_i(p)+\omega_i)^2\\
    &=-\ln\left(\frac{\sqrt{2\pi}\sigma}{10\alpha}\right)-\frac{50\alpha^2}{\sigma^2}\frac{1}{n}\sum_{i=1}^n\big(2\omega_i(f_i(p^0)-f_i(p))+\omega_i^2\big)\\
    &~~~~-\frac{50\alpha^2}{\sigma^2}h_n(p).
\end{align*}
\endgroup
Since $\mathcal{P}$ and $\mathcal{P}^0$ are bounded under Assumption \ref{assum_coordinates}, by Lemma \ref{lem_sqrtn}, it holds that $\frac{1}{n}\sum_{i=1}^n(f_i(p^0)-f_i(p))\omega_i \xra{} 0$ almost surely uniformly on $\mathcal{P}^0$. And it follows from Lemma \ref{lem_sqrtn}  that $\lim_{n \to \infty}\frac{1}{n}\sum_{i=1}^n \omega_i^2 = \sigma^2/(100\alpha^2)$  almost surely. Therefore, by Lemma \ref{lem_uniq}, we have
\begin{align*}
     \frac{1}{n}\ell_n(p) \xra{} -\ln(\sqrt{2\pi}\sigma/(10\alpha)) -\frac{50\alpha^2}{\sigma_a^2}h(p)-\frac{1}{2} = \ell(p)
\end{align*}
almost surely as $n\ra{}\infty$ uniformly for $p \in \mathcal{P}^0$. 

Next, we show that $\nabla^2(-\ell(p^0)) = 100\alpha^2M^0/\sigma^2$. Note that $\nabla(-\ell_n(\widehat{p}_n^{\rm ML})) = 0$. The Taylor expansion of $-\ell_n(p)/n$ in a small neighborhood of $\widehat{p}_n^{\rm ML}$ is \begin{align*}
    -\ell_n(p)/n =&-\ell_n(\widehat{p}_n^{\rm ML})/n \\&+ \frac{1}{2}(p - \widehat{p}_n^{\rm ML})^T \left( \nabla^2(-\ell_n(\widehat{p}_n^{\rm ML})/n) \right)(p - \widehat{p}_n^{\rm ML})\\&+ o_p(\|p - \widehat{p}_n^{\rm ML}\|^2),
\end{align*}
where \begin{align*}
    &\nabla^2(-\ell_n(p)/n) \\=&~ \frac{100\alpha^2}{\sigma^2 n}\sum_{i=1}^n \left[\big(\nabla f_i(p)\big)^T \nabla f_i(p) - \nabla^2 f_i(p)(y_i-f_i(p))\right].
\end{align*} Note that
$
   \nabla^2(-\ell_n(p^0)/n)  \to 100\alpha^2M^0/\sigma^2
$ almost surely.
Morevoer, we can prove that $\nabla^2(-\ell_n(p)/n)$ converges uniformly on $\mathcal{P}^0$ using Lemma \ref{lem_converge_samplemean} as for proving Assumptions \ref{assum_coordinates}-\ref{assum_cohyperplane} and Lemma \ref{lem_converge_M}.
Then, as a direct collary of the uniform convergence, we have  $\nabla^2(-\ell_n(p^0)/n) \to \nabla^2 (-\ell(p^0))$ as $n \to \infty$, which implies $\nabla^2(-\ell(p^0)) = 100\alpha^2M^0/\sigma^2$. 

This completes the proof.

\subsection{Proof of Lemma \ref{lem_eta}}
\label{pf_eta}
 It is straightforward that $2\omega_i = \logten(10^{2\omega_i}) = \ln(10^{2\omega_i})/\ln10,~i=1,...,n$. Since $\{\omega_i\}_{i=1}^n$ are i.i.d. Gaussian random variables, by Lemma \ref{lem_lognormal} , one derives that $\{10^{2\omega_i}\}_{i=1}^n$ are i.i.d. lognormal random variables. Note that $\mathbb{E}[2(\ln10)\omega_i] = 0$ and $\mathbb{V}[2(\ln10)\omega_i] = (\ln 10)^2\sigma^2/(25\alpha^2),~i=1,...,n$. By Lemma \ref{lem_lognormal}, we have
\begin{align*}
     b = \mathbb{E}(10^{2\omega_i}) =e^{(\ln 10)^2\sigma^2/(50\alpha^2)}, 
     \mathbb{V}(10^{2\omega_i})  = b^2(b^2-1).
\end{align*}

    This completes the proof.
    
\subsection{Proof of Proposition \ref{prop_invertgram_1}}
\label{pf_invertgram_1}
    Let $\psi = [\psi_{1:m}^T,~\psi_{m+1}]^T$ be a vector such that
$
    \psi^TX^TX\psi/n = 0.
$
Then it holds that  $-2p_i^T\psi_{1:m}+\psi_{m+1} = 0$ for all $i=1,...,n$. 
This derives that $\psi = 0$. Otherwise,  $\psi \neq  0$  leads to a contradiction that all the sensors $\{p_i,~i=1,...,n\}$ lie on an identical hyperplane perpendicular to $\psi_{1:m}$, i.e.,   $(p_i-p_1)^T\psi_{1:m} = 0$  for all $i=1,...,n$. This  contradicts Assumption \ref{assum_cohyperplane}. Thus, $X^TX/n$ is nonsingular.

Moreover, it follows that  
\begin{align*}
    \frac{1}{n}X^TX  
    =\frac{b^2}{n} \sum_{i=1}^n\begin{bmatrix}4p_ip_i^T & -2p_i\\-2p_i^T & 1\end{bmatrix}.
\end{align*}
It is obvious that the matrix
$\begin{bmatrix}4\tilde{p}\tilde{p}^T & -2\tilde{p}\\-2\tilde{p}^T & 1\end{bmatrix}$ is a continuous and bounded matrix function with respect to $\tilde{p} \in \mathcal{P}$. Then by Lemma \ref{lem_converge_samplemean}, it holds that $\lim_{n \to \infty}X^TX/n$ exists and 
$$\lim_{n \to \infty}X^TX/n = \mathbb{E}_\mu 
\begin{bmatrix}4\tilde{p}\tilde{p}^T & -2\tilde{p}\\-2\tilde{p}^T & 1\end{bmatrix},$$
where $\mathbb{E}_\mu$ is taken over $\tilde{p} \in \mathcal{P}$ with respect to   $\mu$.

Lastly, we prove the non-singularity of $\lim_{n \to \infty}X^TX/n$. Let $\psi' = [\psi_{1:m}'^T,~\psi_{m+1}']^T  $ be a vector such that 
$(\psi')^T\Big(\lim_{n\to \infty}\frac{1}{n}\Phi^T\Phi\Big)\psi'=0.$
Then it holds that
\begin{equation*}
    b^2\mathbb{E}_\mu \left[ 4(\tilde{p}^T\psi_{1:m}')^2 - 4\psi_{m+1}'\tilde{p}^T\psi_{1:m}' + (\psi_{m+1}')^2 \right]=0.
\end{equation*}
Define $\mathcal{P}_{\psi'} \eq  \left\{\tilde{p} \in \mathcal{P} \vert 2\tilde{p}^T\psi_{1:m}' - \psi_{m+1}' = 0\right\}$ for every such $\psi'$. Under Assumption \ref{assum_noise}, by Lemma \ref{lem_eta}, we have $b>0$. Then it holds that $\mu(\mathcal{P}_{\psi'}) = 1$ and further $\psi'=0$.
Otherwise, when $\psi'\neq 0$, the equation
$2\tilde{p}^T\psi_{1:m}' - \psi_{m+1} = 0$  means all the $\tilde{p} \in \mathcal{P}_{\psi'}$ lie on an identical hyperplane. This contradicts Assumption \ref{assum_cohyperplane} and proves the non-singularity of $\lim_{n \to \infty}X^TX/n$.

This completes the proof

\subsection{Proof of Theorem \ref{thm_sqrtn_esti_1}}
\label{pf_sqrtn_esti_1}
We first show that $\widehat{\theta}_n$ is a $\sqrt{n}$-consistent estimator of $\theta^0$. It is straightforward that
\begin{equation*}
    \widehat{\theta}_n = (X^TX)^{-1}X^T(X^T\theta^0+V) = \theta^0+\frac{1}{n}\left(\frac{1}{n}X^TX\right)^{-1}X^TV.
\end{equation*}
Note that 
$
    \frac{1}{n}X^TV = \frac{b}{n}\begin{bmatrix} -2\sum_{i=1}^np_id_i^2\eta_i \\ \sum_{i=1}^nd_i^2\eta_i  \end{bmatrix}.
$
Under Assumption \ref{assum_coordinates}, all the sensors  $p_i$  and distances  $d_i^2$ are uniformly bounded for all $i=1,...,n$. And by Lemma \ref{lem_eta}, it holds that $\{\mathbb{E}(\eta_i^2)\}_{i=1}^n$ are uniformly upper bounded. Then we have, for example,
$
    \mathbb{V}(d_i^2\eta_i) \leq \max_{i=1,...,n}\{d_i^2\} \mathbb{V}(\eta_i).
$
Then by Lemma \ref{lem_sqrtn}, we have
\begin{equation*}
    \frac{1}{n}\sum_{i=1}^np_id_i^2\eta_i = O_p(1/\sqrt{n}),
\quad 
    \frac{1}{n}\sum_{i=1}^nd_i^2\eta_i = O_p(1/\sqrt{n}).
\end{equation*}
This derives that
$
    \frac{1}{n}X^TV= O_p(1/\sqrt{n}).
$
and further $
    \widehat{\theta}_n-\theta^0 = O_p(1/\sqrt{n})
$  by Proposition \ref{prop_invertgram_1}.
We reach
$
    \widehat{p}_n-p^0=O_p(1/\sqrt{n}).
$

This completes the proof.

\subsection{Proof of Proposition \ref{prop_invertgram_2}}
\label{pf_invertgram_2}
With a little of abuse of notation, 
let $\psi = [\psi_{1:m}^T,~\psi_{m+1},~\psi_{m+2}]^T$ be a vector such that
$
    \frac{1}{n}\psi^T\Phi^T\Phi\psi = 0.
$
Then it holds that
\begin{equation*}
    -2p_i^T\psi_{1:m} + \psi_{m+1} + \|p_i\|^2 \psi_{m+2} = 0,\quad i=1,...,n.
\end{equation*}
This derives that  $\psi_{m+2} = 0$.
Otherwise, i.e., $\psi_{m+2} \neq 0$, it holds that
\begin{equation*}
 \left\|p_i-\frac{\psi_{1:m}}{\psi_{m+2}}\right\|^2= \frac{\|\psi_{1:m}\|^2-\psi_{m+1}\psi_{m+2}}{\psi_{m+2}^2},~ i=1,...,n.
\end{equation*}
This means that all the $n$ sensors lie on a circle when \(m = 2\) or on a sphere when \(m = 3\), which contradict Assumption \ref{assum_cohypersphere}.
Then, it follows that 
$
    -2p_i^T\psi_{1:m} + \psi_{m+1} = 0,~ i=1,...,n
$.
By the similar augments used in the proof of Proposition \ref{prop_invertgram_1}, we can derive  $\psi_{1:m}=0,\psi_{m+1}= 0$. Thus, we obtain $\psi =0$.
This proves that $\Phi^T\Phi/n$ is nonsingular.

Moreover, by the similar augments used in the proof of Proposition \ref{prop_invertgram_1}, we can check that the limit $\lim_{n \to \infty}\Phi^T\Phi/n$ exists and it holds that
$$\lim_{n \to \infty}\frac{1}{n}\Phi^T\Phi = \mathbb{E}_\mu\begin{bmatrix} 4\tilde{p}\tilde{p}^T & -2\tilde{p} & -2\|\tilde{p}\|^2\tilde{p}\\ -2\tilde{p}^T & 1 & \|\tilde{p}\|^2\\ -2\|\tilde{p}\|^2\tilde{p}^T & \|\tilde{p}\|^2 & \|\tilde{p}\|^4 \end{bmatrix} ,$$
where $\mathbb{E}_\mu$ is taken over $\tilde{p} \in \mathcal{P}$ with respect to $\mu$.

Lastly, we prove the non-singularity of $\lim_{n \to \infty}\Phi^T\Phi/n$. 
Let $\psi' = [\psi_{1:m}'^T,~\psi'_{m+1},~\psi'_{m+2}]^T  $ be a vector such that
$(\psi')^T \Big(\lim_{n \to \infty}\frac{1}{n}\Phi^T\Phi \Big)\psi' = 0.$
Then it holds that 
$\mathbb{E}_\mu \left[ (-2\tilde{p}^T\psi_{1:m}' + \psi_{m+1}' + \|\tilde{p}\|^2\psi_{m+2}')^2 \right] = 0.$
Denote $\mathcal{P}_{\psi'} = \{\tilde{p} \in \mathcal{P} \vert -2\tilde{p}^T\psi_{1:m}' + \psi_{m+1}' + \|\tilde{p}\|^2\psi_{m+2}' = 0\}$ for every such $\psi'$. Then $\mu(\mathcal{P}_{\psi'}) = 1$. However, every $\tilde{p} \in \mathcal{P}_{\psi'}$ satisfies
$-2\tilde{p}^T\psi_{1:m}' + \psi_{m+1}' + \|\tilde{p}\|^2\psi_{m+2}' = 0.$
This derives that $\psi_{m+2}' = 0$.
Otherwise, $\psi_{m+2}' \neq 0$. Then every $\tilde{p} \in \mathcal{P}_{\psi'}$ satisfies
\begin{equation*}
 \left\|\tilde{p}-\frac{\psi'_{1:m}}{\psi'_{m+2}}\right\|^2= \frac{\|\psi'_{1:m}\|^2-\psi'_{m+1}\psi'_{m+2}}{(\psi'_{m+2})^2},
\end{equation*}
which contradicts Assumption \ref{assum_cohypersphere}.
Accordingly, we proved   $-2\tilde{p}^T\psi_{1:m}' + \psi_{m+1}' = 0.$
By Assumption \ref{assum_cohyperplane}, we can prove that 
 $\psi'_{1:m}=0,~\psi'_{m+1}=0$ as that used in the proof of Proposition \ref{prop_invertgram_1}.
 Thus, we reach $\psi'=0$ and
 this entails that $\lim_{n \to \infty}\Phi^T\Phi/n$ is nonsingular.

This completes the proof.

\subsection{Proof of Theorem \ref{thm_sqrtn_esti_2}}
\label{pf_sqrtn_esti_2}
Under Assumptions \ref{assum_noise}-\ref{assum_cohypersphere}, by the similar arguments used in the proof of Theorem \ref{thm_sqrtn_esti_1}, we have
$
    \widehat{\beta}_n - \beta^0 = O_{p}(1/\sqrt{n}).
$
Then it holds that $[\widehat{\beta}_n]_{1:m} = bp^0+O_p(1/\sqrt{n})$ and $[\widehat{\beta}_n]_{m+2} = b+O_p(1/\sqrt{n})>1$. 
Accordingly, we have
\begin{align*}
    \widehat{p}_n& = \frac{[\widehat{\beta}_n]_{1:m}}{[\widehat{\beta}_n]_{m+2}} = \frac{bp^0+O_p(1/\sqrt{n})}{b+O_p(1/\sqrt{n})} = \frac{p^0+O_p(1/\sqrt{n})}{1+O_p(1/\sqrt{n})}\\
    & = p^0 +O_p(1/\sqrt{n}).
\end{align*}
This completes the proof.

\section*{Appendix B: Auxiliary lemmas}
\renewcommand{\thesection}{B}
\setcounter{equation}{0}
\renewcommand{\theequation}{B\arabic{equation}}
\setcounter{lem}{0}
\setcounter{subsection}{0}
\renewcommand{\thelem}{B\arabic{lem}}

This subsection contains the auxiliary lemmas used for the proofs.
\begin{lem}\cite[Chapter 14.3]{Johnson1994}
\label{lem_lognormal}
Let $X$ be a Gaussian random variable with mean zero and variance $\lambda^2>0$. Then $e^X$ is a lognormal random variable with mean $
   e^{\frac{1}{2}\lambda^2} 
$ and variance 
$
  e^{\lambda^2}(e^{\lambda^2}-1)
$.
Moreover, $\mathbb{V}\big((e^{X}-\mathbb{E}(e^X))^2\big) < \infty$. 
\end{lem}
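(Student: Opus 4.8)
The plan is to reduce the entire lemma to a single ingredient: the moment generating function (MGF) of the Gaussian variable $X$. Since $X \sim \mathcal{N}(0,\lambda^2)$ with $\lambda^2>0$, completing the square in the defining Gaussian integral yields $\mathbb{E}[e^{tX}] = e^{t^2\lambda^2/2}$ for every real $t$. This one identity supplies all the moments of $e^X$ that the three assertions require, so once it is established the rest is bookkeeping. The statement that $e^X$ is lognormal is immediate from the definition of the lognormal distribution as the law of the exponential of a Gaussian, so no separate argument is needed there.

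First I would read off the mean and the second moment from the MGF. Evaluating at $t=1$ gives $\mathbb{E}[e^X] = e^{\lambda^2/2}$, which is the claimed mean, and evaluating at $t=2$ gives $\mathbb{E}[e^{2X}] = e^{2\lambda^2}$. Subtracting the square of the mean then produces the variance:
\begin{equation*}
\mathbb{V}(e^X) = \mathbb{E}[e^{2X}] - \big(\mathbb{E}[e^X]\big)^2 = e^{2\lambda^2} - e^{\lambda^2} = e^{\lambda^2}\big(e^{\lambda^2}-1\big),
\end{equation*}
matching the stated formula.

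For the last assertion $\mathbb{V}\big((e^X - \mathbb{E}(e^X))^2\big) < \infty$, the key observation is that the Gaussian MGF is finite for \emph{all} $t$, so every integer moment $\mathbb{E}[e^{kX}] = e^{k^2\lambda^2/2}$ is finite. Writing $\mu = e^{\lambda^2/2}$ and expanding the square of the centred variable and then its square, the quantity $\mathbb{V}((e^X-\mu)^2) = \mathbb{E}[(e^X-\mu)^4] - \big(\mathbb{E}[(e^X-\mu)^2]\big)^2$ becomes a finite linear combination of the moments $\mathbb{E}[e^{kX}]$ for $k=0,1,2,3,4$, each of which is finite; in particular the highest-order term $\mathbb{E}[e^{4X}] = e^{8\lambda^2}$ is finite. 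Hence the variance is finite. I do not anticipate any real obstacle here: the only point requiring care is confirming that the fourth moment $\mathbb{E}[e^{4X}]$ is finite, and this follows directly from the everywhere-finiteness of the Gaussian MGF, so the argument is entirely elementary.
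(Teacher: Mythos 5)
Your proof is correct: the MGF identity $\mathbb{E}[e^{tX}]=e^{t^2\lambda^2/2}$ immediately yields the mean at $t=1$, the variance from $t=2$, and the finiteness of $\mathbb{V}\big((e^X-\mathbb{E}(e^X))^2\big)$ from the finiteness of $\mathbb{E}[e^{4X}]$. The paper itself gives no proof (it simply cites Johnson et al., Chapter 14.3), and your computation is exactly the standard derivation underlying that reference, so there is nothing to reconcile.
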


\begin{lem}
\label{lem_converge_samplemean}
    Let $\mu_n$ be an empirical distribution with a compact support $\mathcal{Q} \subset \mathbb{R}^m$, which converges to a distribution $\mu$. Then for any continuous and bounded function $f(\cdot)$ on $\mathcal{Q}$, it holds that $\mathbb{E}_\mu(f(x))$ exists and 
    \begin{equation*}
        \int f(x) d\mu_n(x) \to \int f(x) d\mu(x) = \mathbb{E}_\mu(f(x)),~n \to \infty,
    \end{equation*}
    where the expectation is taken over $x$ with respect to $\mu$.
    
    Further, for any continuous and bounded function $f(x,c)$ on $\mathcal{Q} \times \mathcal{Q}^0$, where $\mathcal{Q}^0$ is compact as well, it holds that $\mathbb{E}_\mu(f(x,c))$ exists and 
    \begin{equation*}
        \int f(x,c) d\mu_n(x) \to \int f(x,c) d\mu(x) = \mathbb{E}_\mu(f(x,c)),~ n \to \infty
    \end{equation*}
    uniformly on $\mathcal{Q}^0$,  where the expectation is taken over $x$ with respect to $\mu$.
\end{lem}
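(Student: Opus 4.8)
The plan is to treat the two parts of the lemma separately: the first is the classical Helly--Bray (equivalently Portmanteau) characterization of weak convergence, and the second upgrades the resulting pointwise convergence to uniform convergence over the compact parameter set $\mathcal{Q}^0$ via an equicontinuity argument.

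For the first part, I would begin by observing that the stated convergence of the empirical distribution $\mu_n$ to $\mu$ (i.e.\ convergence of the associated distribution functions) is precisely weak convergence $\mu_n \Rightarrow \mu$. Since every $\mu_n$ is supported on the compact set $\mathcal{Q}$ and is a probability measure, no mass escapes to infinity; applying the Portmanteau theorem to the closed set $\mathcal{Q}$ gives $\mu(\mathcal{Q})=1$, so $\mu$ is a probability measure supported on $\mathcal{Q}$ and $\mathbb{E}_\mu(f(x)) = \int f\,d\mu$ is well defined and finite because $f$ is bounded. The convergence $\int f\,d\mu_n \to \int f\,d\mu$ for every bounded continuous $f$ is then exactly the defining property of weak convergence; if one wishes to invoke it only for functions defined on $\mathcal{Q}$, the Tietze extension theorem extends $f$ to a bounded continuous function on $\mathbb{R}^m$ without altering either integral. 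This settles the first claim.

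For the second part, fix the notation $g_n(c) \eq \int f(x,c)\,d\mu_n(x)$ and $g(c) \eq \int f(x,c)\,d\mu(x)$. For each fixed $c \in \mathcal{Q}^0$ the map $x \mapsto f(x,c)$ is continuous and bounded on $\mathcal{Q}$, so the first part yields $g_n(c) \to g(c)$ pointwise. The key additional ingredient is equicontinuity in $c$, uniform in $n$: because $f$ is continuous on the compact product $\mathcal{Q}\times\mathcal{Q}^0$ it is uniformly continuous there, so for every $\varepsilon>0$ there is $\delta>0$ with $|f(x,c)-f(x,c')|<\varepsilon$ whenever $\|c-c'\|<\delta$, uniformly over $x \in \mathcal{Q}$. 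Since each $\mu_n$ and $\mu$ are probability measures, integrating this bound gives $|g_n(c)-g_n(c')|\le\varepsilon$ and $|g(c)-g(c')|\le\varepsilon$ for all $n$, so the family $\{g_n\}$ together with $g$ is uniformly equicontinuous on $\mathcal{Q}^0$.

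Finally I would combine pointwise convergence with equicontinuity through the standard $\varepsilon/3$ argument on the compact set $\mathcal{Q}^0$: cover $\mathcal{Q}^0$ by finitely many balls of radius $\delta$ centered at $c_1,\dots,c_K$, use pointwise convergence at each center to obtain a common $N$ with $|g_n(c_j)-g(c_j)|<\varepsilon$ for all $j$ and $n\ge N$, and then for arbitrary $c$ bound $|g_n(c)-g(c)|$ by $|g_n(c)-g_n(c_j)|+|g_n(c_j)-g(c_j)|+|g(c_j)-g(c)|<3\varepsilon$ using the nearest center $c_j$. As $\varepsilon$ is arbitrary, this delivers uniform convergence on $\mathcal{Q}^0$. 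I expect the main obstacle to be this second part: the first is essentially a citation of a classical weak-convergence theorem, whereas the uniformity requires correctly isolating the equicontinuity furnished by uniform continuity of $f$ on the compact product and executing the finite-net argument carefully, which is where the genuine work lies.
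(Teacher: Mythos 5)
Your proposal is correct and follows essentially the same route as the paper: the Portmanteau theorem for the pointwise statement, then uniform continuity of $f$ on the compact product to get equicontinuity of the family $g_n(c)=\int f(x,c)\,d\mu_n(x)$, and finally uniform convergence on $\mathcal{Q}^0$. The only cosmetic difference is that the paper cites the Arzel\`a--Ascoli theorem at the last step, whereas you carry out the equivalent $\varepsilon/3$ finite-net argument explicitly.
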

\begin{proof}
Since the empirical distribution $\mu_n$ converges to $\mu$, by the Portmanteau theorem \cite{van1996weak}, it is straightforward that
\begin{equation*}
    \int f(x) d\mu_n(x) \to \int f(x) d\mu(x) =\mathbb{E}_\mu(f(x))< \infty,
\end{equation*}
for continuous and bounded $f(x)$ on compact $\mathcal{Q}$.
Similarly, for any continuous and bounded function $f(x,c)$ on $\mathcal{Q} \times \mathcal{Q}^0$, the point-wise convergence holds, i.e. for every $c \in \mathcal{Q}^0$,
    \begin{equation*}
        \int f(x,c) d\mu_n(x) \to \int f(x,c) d\mu(x) = \mathbb{E}_\mu(f(x,c)), ~ n \to \infty.
    \end{equation*}
Since $f(x,c)$ is continuous on the compact set $\mathcal{Q} \times \mathcal{Q}^0$, it is uniformly continuous on $\mathcal{Q} \times \mathcal{Q}^0$. Thus, for every $\varepsilon > 0$, there exists $\delta_c > 0$ such that for all $x \in \mathcal{Q}$ and $c_1,c_2 \in \mathcal{Q}^0$, if $\|c_1-c_2\| < \delta_c$, then $\|f(x,c_1)-f(x,c_2)\|_F < \varepsilon$. Here, $\|\cdot\|_F$ represents the Frobenius norm of a matrix $(\cdot)$, and it coincidices with the 2-norm when $(\cdot)$ is a vector. Then
\begin{align*}
    &\left\|\int f(x,c_1) d\mu_n(x)- \int f(x,c_2) d\mu_n(x)\right\|_F\\
    &\hspace{15mm}\leq  \int \left\|f(x,c_1)-f(x,c_2)\right\|_F d\mu_n(x)
    \leq  \varepsilon.
\end{align*}
    Then $\{ g_n(c) = \int f(x,c) d\mu_n(x) \}_{n=1}^\infty$ is a family of equicontinuity functions. Thus, by the Arzelà–Ascoli Theorem \cite{arzele1895}, we conclude that 
    \begin{equation*}
        \int f(x,c) d\mu_n(x) \to \int f(x,c) d\mu(x) = \mathbb{E}_\mu(f(x,c)),~ n \to \infty
    \end{equation*}
    uniformly on $\mathcal{Q}^0$.
\end{proof}

\begin{lem}\cite[Theorem 14.4-1 on page 476]{Bishop2007}
	\label{lem_sqrtn}
	Let $\{X_k\}$ be a sequence of independent random variables with $\mathbb{E}X_k = 0$ and $\mathbb{E}X_k^2 \leq C < \infty$ for all $k$ and a positive constant $C$. Then, there holds that $\frac1n\sum_{k=1}^{n}X_k = O_p(1/\sqrt{n})$.
\end{lem}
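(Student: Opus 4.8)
The plan is to reduce the statement $\frac{1}{n}\sum_{k=1}^n X_k = O_p(1/\sqrt{n})$ to a single uniform second-moment estimate and then invoke Chebyshev's inequality. Writing $S_n \eq \frac{1}{n}\sum_{k=1}^n X_k$, the claim is equivalent to showing that the rescaled sum $\sqrt{n}\,S_n$ is bounded in probability, i.e., that for every $\varepsilon > 0$ there exists a constant $M$ (independent of $n$) with $\mathbb{P}(|\sqrt{n}\,S_n| > M) \le \varepsilon$ for all $n$. The entire argument therefore hinges on controlling the second moment of $\sqrt{n}\,S_n$ uniformly in $n$.

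First I would compute the variance of the average. Expanding the square and using that the $X_k$ are independent with mean zero, every cross term vanishes, since $\mathbb{E}[X_k X_j] = \mathbb{E}[X_k]\,\mathbb{E}[X_j] = 0$ for $k \neq j$. Hence only the diagonal terms survive, and together with the uniform bound $\mathbb{E}[X_k^2] \le C$ this gives
\begin{equation*}
\mathbb{E}[S_n^2] = \frac{1}{n^2}\sum_{k=1}^n \mathbb{E}[X_k^2] \le \frac{1}{n^2}\cdot nC = \frac{C}{n}.
\end{equation*}
Consequently $\mathbb{E}[(\sqrt{n}\,S_n)^2] = n\,\mathbb{E}[S_n^2] \le C$, so the rescaled sum has second moment bounded by the same constant $C$ for every $n$.

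Next I would apply Markov's inequality to the nonnegative random variable $(\sqrt{n}\,S_n)^2$, which yields, for any $M > 0$,
\begin{equation*}
\mathbb{P}(|\sqrt{n}\,S_n| > M) = \mathbb{P}\big((\sqrt{n}\,S_n)^2 > M^2\big) \le \frac{\mathbb{E}[(\sqrt{n}\,S_n)^2]}{M^2} \le \frac{C}{M^2}.
\end{equation*}
Given $\varepsilon > 0$, choosing $M = \sqrt{C/\varepsilon}$ makes the right-hand side at most $\varepsilon$ simultaneously for every $n$, which is exactly the boundedness in probability of $\sqrt{n}\,S_n$. This establishes $S_n = O_p(1/\sqrt{n})$ and completes the argument.

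The computation is routine, so there is no substantial obstacle; the only point requiring care is that the tail bound must hold uniformly in $n$ rather than merely asymptotically, and this is precisely what the exact cancellation between $\mathbb{E}[S_n^2] \le C/n$ and the rescaling by $\sqrt{n}$ delivers. I would also remark that the proof uses the hypotheses only through the vanishing of the cross-correlations $\mathbb{E}[X_k X_j]$ and the common second-moment bound, so pairwise uncorrelatedness together with uniformly bounded variances already suffices; this is what makes the lemma directly applicable to the independent summands $d_i^2\eta_i$ (and their weighted versions $p_i d_i^2 \eta_i$) arising in the consistency proofs.
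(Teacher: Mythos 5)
Your argument is correct and is the standard one: independence plus zero means kill the cross terms, giving $\mathbb{E}[S_n^2]\le C/n$, and Chebyshev's inequality then yields uniform-in-$n$ tightness of $\sqrt{n}\,S_n$. The paper itself does not prove this lemma but simply cites it from \cite[Theorem 14.4-1]{Bishop2007}, and your proof is essentially the one found there; your closing observation that pairwise uncorrelatedness with uniformly bounded variances already suffices is also accurate and is exactly why the lemma applies to the summands $d_i^2\eta_i$ and $p_i d_i^2\eta_i$ in the consistency proofs.
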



\end{document}